\renewcommand{\arraystretch}{1.2}
\newdimen\normalarrayskip              
\newdimen\minarrayskip                 
\newif\ifold             \oldtrue            \def\new{\oldfalse}
\def\arraymode{\ifold\relax\else\displaystyle\fi} 
\def\eqnumphantom{\phantom{(\theequation)}}     
\def\@arrayskip{\ifold\baselineskip\z@\lineskip\z@
     \else
     \baselineskip\minarrayskip\lineskip2\minarrayskip\fi}
\def\@arrayclassz{\ifcase \@lastchclass \@acolampacol \or
\@ampacol \or \or \or \@addamp \or
   \@acolampacol \or \@firstampfalse \@acol \fi
\edef\@preamble{\@preamble
  \ifcase \@chnum
     \hfil$\relax\arraymode\@sharp$\hfil
     \or $\relax\arraymode\@sharp$\hfil
     \or \hfil$\relax\arraymode\@sharp$\fi}}
\def\@array[#1]#2{\setbox\@arstrutbox=\hbox{\vrule
     height\arraystretch \ht\strutbox
     depth\arraystretch \dp\strutbox
     width\z@}\@mkpream{#2}\edef\@preamble{\halign
\noexpand\@halignto
\bgroup \tabskip\z@ \@arstrut \@preamble \tabskip\z@ \cr}%
\let\@startpbox\@@startpbox \let\@endpbox\@@endpbox
  \if #1t\vtop \else \if#1b\vbox \else \vcenter \fi\fi
  \bgroup \let\par\relax
  \let\@sharp##\let\protect\relax
  \@arrayskip\@preamble}
\def\eqnarray{\stepcounter{equation}%
              \let\@currentlabel=\theequation
              \global\@eqnswtrue
              \global\@eqcnt\z@
              \tabskip\@centering
              \let\\=\@eqncr
 \halign to \displaywidth\bgroup
    \eqnumphantom\@eqnsel\hskip\@centering
    $\displaystyle \tabskip\z@ {##}$%
    \global\@eqcnt\@ne \hskip 2\arraycolsep
         $\displaystyle\arraymode{##}$\hfil
    \global\@eqcnt\tw@ \hskip 2\arraycolsep
         $\displaystyle\tabskip\z@{##}$\hfil
         \tabskip\@centering
    &{##}\tabskip\z@\cr}
\def\input#1 {\endgroup}\fi
\newcounter{app}
\def\app{\setcounter{equation}{0}
\def\theequation{A\Roman{app}.\arabic{equation}}\par
   \addvspace{4ex}
   \@afterindentfalse
  \secdef\@app\@dapp}
\newcommand\@app{\@startsection {app}{1}{0ex}%
                                   {-3.5ex \@plus -1ex \@minus -.2ex}%
                                   {2.3ex \@plus.2ex}%
                                   {\normalfont\Large\bf}}
\def\@dapp#1{%
{\parindent \z@ \raggedright  \bf #1}\par\nobreak}
\def\l@app#1#2{\ifnum \c@tocdepth >\z@
    \addpenalty\@secpenalty
    \addvspace{1.0em \@plus\p@}%
    \setlength\@tempdima{8.5em}%
    \begingroup
      \parindent \z@ \rightskip \@pnumwidth
      \parfillskip -\@pnumwidth
      \leavevmode \bfseries
      \advance\leftskip\@tempdima
      \hskip -\leftskip
      #1\nobreak\hfil \nobreak\hb@xt@\@pnumwidth{\hss #2}\par
    \endgroup\fi}
\newcounter{sapp}[app]
\def\sapp{\def\theequation{A\arabic{app}.\arabic{equation}}\par
   \@afterindentfalse
  \secdef\@sapp\@dsapp}
\newcommand\@sapp{\@startsection{sapp}{2}{\z@}%
                                     {-3.25ex\@plus -1ex \@minus -.2ex}%
                                     {1.5ex \@plus .2ex}%
                                     {\normalfont\large\bfseries}}
\def\@dsapp#1{%
{\parindent \z@ \raggedright  \bf #1}\par\nobreak}
\newcommand{\l@sapp}{\@dottedtocline{2}{1.5em}{3em}}
\def\draft{\oddsidemargin -.5truein
        \def\@oddfoot{\sl preliminary draft \hfil
        \rm\thepage\hfil\sl\today\quad\militarytime}
        \let\@evenfoot\@oddfoot \overfullrule 3pt
        \let\label=\draftlabel
        \let\marginnote=\draftmarginnote
   \def\@eqnnum{(\theequation)\rlap{\kern\marginparsep\tt\@eqnlabel}%
\global\let\@eqnlabel\@vacuum}  }
\def\be{\begin{eqnarray}}
\def\ee{\end{eqnarray}}
\def\nn{\nonumber}
\def\p{\partial}
\def\beq{\begin{equation}}
\def\eeq{\end{equation}}
\def\ba{\beq\new\begin{array}{c}}
\def\ea{\end{array}\eeq}
\def\be{\ba}
\def\ee{\ea}
\def\sspan{{\rm span}\,}
\newfont{\Bbbb}{msbm7 scaled 1\@ptsize00}
\newcommand{\z}{\raise-1pt\hbox{$\mbox{\Bbbb Z}$}}
\newcommand{\n}{\raise-1pt\hbox{$\mbox{\Bbbb N}$}}
\newcommand{\rr}{\raise-1pt\hbox{$\mbox{\Bbbb R}$}}
\newcommand{\cc}{\raise-1pt\hbox{$\mbox{\Bbbb C}$}}
\def\normordboson{ {\scriptstyle {{*}\atop{*}}} }
\def\ch{{\rm ch}}
\newcommand{\<}{\left <}
\renewcommand{\>}{\right >}
\newfont{\alef}{msbm10 at 11pt}
\newfont {\goth}{eufm10 at 11pt}
\def\mathbb#1{\hbox{{\alef #1}}}
\DeclareMathOperator{\GL}{GL}
\let\@@savethanks\thanks
\def\thanks#1{\gdef\thefootnote{\alph{footnote}}\@@savethanks{#1}}
\newtheorem{theorem}{Theorem}
\newtheorem{lemma}{Lemma}[section]
\newtheorem{proposition}[lemma]{Proposition}
\newtheorem{corollary}[lemma]{Corollary}
\newtheorem{remark}{Remark}[section]
\newtheorem{conjecture}{Conjecture}[section]
\newtheorem*{theorem*}{Theorem}
\numberwithin{equation}{section}
\g@addto@macro \normalsize {%
 \setlength\abovedisplayskip{14pt plus 3pt minus 3pt}%
 \setlength\belowdisplayskip{14pt plus 3pt minus 3pt}%
  \setlength\abovedisplayshortskip{11pt plus 3pt minus 3pt}%
 \setlength\belowdisplayshortskip{11pt plus 3pt minus 3pt}%
}
\title{
\bigskip
{\bf KP integrability of triple Hodge integrals. I. \\
 From Givental group to hierarchy symmetries} \vspace{.5cm}}
\author{{\bf Alexander Alexandrov}\thanks{E-mail:  {\tt alexandrovsash at gmail.com}}
\date{ } \\
{\small {\it Center for Geometry and Physics, Institute for Basic Science (IBS),
 Pohang 37673, Korea  }}
}
\begin{document}

\setcounter{footnote}{0}

\setcounter{tocdepth}{3}

\maketitle

\vspace{-8.0cm}

\begin{center}
\end{center}

\vspace{6.5cm}
\begin{small}\begin{center}
\today
\end{center}\end{small}
\begin{abstract} 
In this paper, we investigate a relation between the Givental group of rank one and the Heisenberg-Virasoro symmetry group  of the KP hierarchy. We prove, that only a two-parameter family of the Givental operators can be identified with elements of the Heisenberg-Virasoro symmetry group. This family describes triple Hodge integrals satisfying the Calabi-Yau condition.
Using the identification of the elements of two groups we prove that the generating function of triple Hodge integrals satisfying the Calabi-Yau condition and its $\Theta$-version are tau-functions of the KP hierarchy. This generalizes the result of Kazarian on KP integrability in the case of linear Hodge integrals.
\end{abstract}
\bigskip


\bigskip

{\small \bf MSC 2020 Primary: 37K10, 14N35, 81R10, 14N10; Secondary:  81T32.}

\newpage

\tableofcontents

\def\thefootnote{\arabic{footnote}}
\section{Introduction}
\setcounter{equation}{0}

Since the works of  't Hooft \cite{tHooft} and Br\'ezin, Itzykson, Parisi, and Zuber \cite{BIPZ}, it is known that the matrix models provide an efficient tool for the investigation of enumerative geometry problems. Recent remarkable progress in enumerative geometry/matrix model theory, in particular, is related to the Chekhov-Eynard-Orantin topological recursion \cite{TR2,TR3} and Givental decomposition \cite{Giv1,Giv2}. Both describe a non-trivial extension of the matrix model theory, which ranges far beyond the scope of the classical matrix integrals. Both topological recursion and Givental decomposition are naturally described in terms of summation over graphs, thus formally can be represented by infinite-dimensional Gaussian matrix integrals with integrands made of the Kontsevich-Witten tau-function and its generalizations. 

While a relationship between these two methods is known \cite{DBOSS,CN}, their relation to other classical ingredients of the  matrix model theory is not always clear. Usually, matrix models satisfy two different families of differential (or differential-difference) equations. One family, the { \it linear } equations, is given by the Virasoro or, more generally, W-constraints. For known examples, recursive solution of these equations naturally leads to topological recursion and Givental-type decomposition, see, e.g., \cite{AMM2,AMM4,AMM5,AMM6,AMM8,EynardMM}. Another family of equations is given by the {\it bilinear} Hirota equations of the KP/Toda type integrable hierarchies. While the Virasoro constraints constitute an inherent part of the Givental construction, the situation with integrability is much less clear. Many examples of the enumerative geometry generating functions, which can be described by topological recursion/Givental decomposition (TR/GD), are certain tau-functions of integrable hierarchies, however, a general relation between TR/GD and integrability is not known (see, however, \cite{GM,FGM,Feigin,EyInt}).

The main goal of this paper is to clarify the role of KP/Toda type integrability in the general scheme of TR/GD. It is important, in particular, because some elements of the integrable systems can be naturally identified with the main components of the TR/GD. For example, the quantum spectral curve, playing a central role in TR/GD construction, can be identified with a specific Kac-Schwarz operator, while the wave function, annihilated by this operator, is given by the Baker-Akhiezer function\cite{Aeg,KS1,KS2,KS3,KS4,H3_2}.

In this paper, we address a question of the relation between the Givental group and the symmetry groups of the integrable hierarchies. These are different groups, acting on different spaces. However, sometimes it is possible to identify the elements of two groups up to a linear change of variables. An example of such identification was constructed in the work of Liu and Wang \cite{LiuWang}. They showed that the Givental operator associated with the generating function of linear Hodge integrals is equal to an element of the Heisenberg-Virasoro subgroup of $GL(\infty)$ up to a linear change of variables. 

Therefore, a question arises: are there any other  elements of the Givental group, which can be identified with the elements of the symmetry group of the integrable hierarchy? In this paper, we give a complete answer to this question for the rank one case of the Givental construction on one side and the KP integrable hierarchy on another. By {\em rank} here we mean the rank of the Frobenius manifold = the number of the KdV tau-functions in the Givental formula. The first important result of this paper, Theorem \ref{th_RL}, states that the infinite-dimensional Givental group contains only a two-dimensional family of such operators. 

On the enumerative geometry side, this two-dimensional family is known to describe a very interesting generating function, namely, a generating function of triple Hodge integrals satisfying the Calabi-Yau condition. Hence, from the identification of the Givental and Heisenberg-Virasoro group elements it follows, that after a certain linear change of variables the generating function of triple Hodge integrals satisfying the Calabi-Yau condition becomes a tau-function of the KP hierarchy. Theorem \ref{Th_H}, which describes this KP tau-function, is the second important result of this paper. It generalizes the result of Kazarian for linear Hodge integrals \cite{Kaza}.  Theorem \ref{Norbt} extends this KP integrability to the case with included Norbery's $\Theta$-classes.

We restrict ourselfes to the simplest rank one case. However, we expect that the results of this paper can be immediately generalized to the higher rank Givental groups. This should lead to new interesting examples of integrable generating functions of the cohomological field theory. In the companion paper \cite{H3_2} we investigate a family of the tau-functions of the KP hierarchy, that generalizes the two-parametric family constructed in this paper and conjecturally describes interesting enumerative geometry invariants in the $r$-spin case. This family can be described by a deformation of the generalized Kontsevich model. The current project, in particular, was motivated by \cite{MMKH}, where a question about KP integrability of non-linear Hodge integrals and their description in terms of generalized Kontsevich model was raised.

It is known that enumerative geometry generating functions constitute only a small class of tau-functions of integrable hierarchies. We hope that the identification between the elements of the Givental group and symmetry groups of integrable hierarchies will help us to classify all such tau-functions.

The present paper is organized as follows. In Section \ref{Section2} we consider rank one Givental operators, which after a linear change of variables can be identified with the elements of the Heisenberg-Virasoro symmetry group of the KP hierarchy. 
In Section \ref{S_Hodge} we identify this family with the generating functions of triple Hodge integrals, which proves the KP integrability of the latter.


\section{Two groups of symmetries}\label{Section2}

In this section we consider two different infinite-dimensional groups acting on the bosonic Fock spaces. First of them, the Givental group of the quantized symplectic transformations is a group of symmetries of semi-simple cohomological field theory. Second is a Heisenberg-Virasoro subgroup of the $GL(\infty)$ symmetry group of the KP hierarchy. We restrict the action of both groups to the space of functions of odd times. We prove, that on this space only a two-parametric family of the Givental operators can be identified, up to a linear change of variables, with the elements of the Heisenberg-Virasoro group. 

\subsection{Givental group}\label{Givsec}

Let us consider the Givental quantization scheme for the rank one case. We basically follow the presentation of \cite{Giv2,Lee,Dubrovin16}. However, we consider a different normalization of the variables $T_k$, with an additional factor of $\hbar$. This change of normalization corresponds to the transition from the genus expansion of the generating function
\be
\exp\left(\sum_{g=0}^\infty \sum_{n=1}^\infty \hbar^{2g-2}{\mathcal F}_{g,n}\right),
\ee
where $g$ is the genus, $n$ is the number of the marked points, to the topological expansion 
\be
\exp\left(\sum_{g=0}^\infty \sum_{n=1}^\infty \hbar^{2g-2+n}{\mathcal F}_{g,n}\right),
\ee
because the latter is more convenient for the KP hierarchy description.  

Let ${ H}={\mathbb C}[\![z,z^{-1}]\!]$ be the space of formal Laurent series in an indeterminate $z$. We introduce the symplectic form on this space
\be
\Omega(f,g)= \frac{1}{2\pi i} \oint f(-z)g(z) dz.
\ee
For the natural decomposition 
\be
{ H}={H}_+\oplus{H}_-,
\ee
where ${ H}_+=\sspan_{\cc}\{1,z,z^2,\dots\}$ and ${H}_-=\sspan_{\cc}\{z^{-1},z^{-2},z^{-3},\dots\}$, we introduce the Darboux coordinate system $\{p_k,q_m\}$ with
\be
{\mathcal J}(z):=\sum_{k=0}^\infty \left(q_k z^{k}+(-z)^{-k-1}p_k\right).
\ee
After Givental, for any infinitesimal symplectic transformation $A$ on ${H}$ (that is, a transformation such that $\Omega(Af,g)+\Omega(f,Ag)=0$) we consider
\be
H_A:=\frac{1}{2}\Omega(A{\mathcal J}, {\mathcal J}).
\ee
This defines a Lie algebra isomorphism:
\be
H_{\left[A,B\right]}=\left\{H_A,H_B\right\},
\ee
where the Poisson bracket is given by
\be
\left\{H_A,H_B\right\}:=\sum_{i=0}^\infty\left(\frac{\p H_A}{\p p_i}\frac{\p H_B}{\p q_i}-\frac{\p H_B}{\p p_i}\frac{\p H_A}{\p q_i}\right).
\ee

Using the standard Weyl quantization we quantize these operators to order $\leq 2$ linear differential operators 
\be\label{Gquant}
\widehat{A}=\widehat{H}_A:=\frac{1}{2}\normordboson\Omega(A \widehat{\mathcal J}, \widehat{\mathcal J})\normordboson,
\ee
where
\be\label{Givcur}
\widehat{\mathcal J}(z)=\sum_{k=0}^{\infty}\left( \tilde{T}_kz^k+ (-z)^{-k-1}\frac{\p}{\p T_k}\right),
\ee
and $\normordboson\dots \normordboson$ denotes the standard bosonic normal ordering, which puts all $\frac{\p}{\p T_k}$ to the right of all $T_k$. The so-called {\em dilaton shift}  is given by $\tilde{T}_k=T_k-\hbar^{-1} \delta_{k,1}$.
This gives us a central extension of the original algebra, with the commutator
\be
\left[\widehat{A},\widehat{B}\right]=\widehat{\left[A,B\right]}+\mathcal{C}(H_A,H_B),
\ee
where the so-called 2-cocycle term satisfies
\be
\mathcal{C}\left(p_ip_j,q_k q_m\right)=-\mathcal{C}\left(q_k q_m,p_ip_j\right)=\delta_{i,k}\delta_{j,m}+\delta_{i,m}\delta_{j,k}
\ee
and vanishes for all other elements $H_A$. For the finite symplectic transformations we define
\be
\widehat{e^{A}}:=e^{\widehat{A}}.
\ee

Below we consider only the symplectic transformations of the form $R(z)=1+R_1z+R_2z^2+\dots\in 1+ z{\mathbb C}[\![z]\!]$.  They satisfy the symplectic condition
\be
R(z)R(-z)=1
\ee 
if and only if $\log R(z)$ is odd function of $z$. Let us denote
\be
\widehat{W}_k:=\widehat{z^{2k-1}}, \,\,\,\,k \in  {\mathbb Z},
\ee
then from (\ref{Gquant}) we have \cite{Giv2,Lee} 
\be\label{Woper}
\widehat{W}_k=-\sum_{m}\tilde{T}_m\frac{\p}{\p T_{m+2k-1}}+\frac{1}{2}\sum_{m=0}^{-2k}(-1)^{l+1}\tilde{T}_m\tilde{T}_{-2k-m}+\frac{1}{2}\sum_{m=0}^{2k-2}(-1)^m\frac{\p^2}{\p T_m \p T_{2k-m-2}}.
\ee
These operators satisfy the commutation relations
\be
\left[\widehat{W}_k,\widehat{W}_m\right]=-\frac{2k-1}{2}\delta_{k+m,1}.
\ee
After quantization, using (\ref{Woper}) we obtain 
\be\label{UTS}
\widehat{R}:=\exp\left(\widehat{\log R(z)}\right).
\ee
Below we call the group of such operators the {\em Givental group}.

The Givental group acts of the space of cohomological field theories (CohFT) with flat unit. If we relax the flat unit condition, we can include translations of the 
times, see an example in Section \ref{S_theta}. However, translations are natural symmetries of the KP hierarchy, so this enrichment is trivial from the point of view of the identification of operators of two groups.

Operators $\widehat{R}$ can be factorized, namely, one can factor out the part, corresponding to the linear change of variables and the translation of variables. This factorization, except for simple extraction of the translation operator, is given by Proposition 7.3 in \cite{Giv2}. Here we briefly remind the reader this relation. Let us introduce a formal series in two variables
\be\label{Vgiv}
V^R(z,w):=\frac{1-R(-w)R(-z)}{w+z}.
\ee
The matrix of its coefficients 
\be\label{gV}
V^R(z,w)=\sum_{k,l=0}^\infty V^R_{kl}w^k z^l
\ee
is symmetric, $V_{kl}^R=V_{lk}^R$, and its entries are polynomials in $R_k$. Let the linear change of the shifted variables from ${\bf T}$ to ${\bf T^R}$ be given by
\be\label{trans1}
\sum_{k=0}^\infty \tilde{T}^R_k z^k:=R(-z)\sum_{k=0}^\infty \tilde{T}_kz^k.
\ee
This transformation generates the change of the dilaton shift
\be\label{shiftsR}
\sum_{k=2}^\infty \delta_k z^k:=z(1-R(-z)).
\ee
 Then, for any element of the  Givental group (\ref{UTS})
and any series $Z({\bf T})$ we have \cite{Giv2}
\begin{lemma}[Givental]\label{lem_G}
\be\label{Givf}
\widehat{R}\cdot Z({\bf T})=\,e^{\frac{1}{2}\sum_{i,j=0}^\infty{V_{ij}^R}\frac{\p^2}{\p T_i \p T_j}}\,\left.e^{\hbar^{-1} \sum_{k=2}^\infty \delta_k \frac{\p}{\p T_k}}\cdot Z({\bf T})\right|_{{T}_k\mapsto {T}^R_k}.
\ee
\end{lemma}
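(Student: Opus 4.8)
The plan is to reduce the statement to a computation with the quantized current $\widehat{\mathcal J}(z)$ and then to recognize the three factors on the right-hand side of \eqref{Givf} as the pieces of the quadratic operator $\widehat{\log R(z)}$. First I would write $A=\log R(z)$, so that $A$ is multiplication by an odd formal series $a(z)=\sum_{j\ge 1}a_j z^j$, and $R(z)=e^{a(z)}$, $\widehat R=e^{\widehat A}$. The operator $\widehat A=\tfrac12\,\normordboson\Omega(A\widehat{\mathcal J},\widehat{\mathcal J})\normordboson$ is, by \eqref{Gquant}–\eqref{Givcur}, a sum of three kinds of terms: a term bilinear in the $\tilde T_k$ and $\partial/\partial T_m$ (the ``$T\partial$'' part, which generates the linear change of variables), a term quadratic in the $\tilde T_k$ alone (the ``$TT$'' part, which after using $\tilde T_k=T_k-\hbar^{-1}\delta_{k,1}$ produces both a genuinely quadratic piece and, from the dilaton shift, a linear-in-$T$ piece, i.e.\ a translation), and a term quadratic in the derivatives $\partial^2/\partial T_i\partial T_j$. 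The claim is precisely that conjugation/exponentiation reorganizes these into: (i) the Gaussian-type operator $\exp\bigl(\tfrac12\sum V^R_{ij}\partial_i\partial_j\bigr)$ on the left, (ii) the translation operator $\exp\bigl(\hbar^{-1}\sum\delta_k\partial_k\bigr)$, and (iii) the substitution $T_k\mapsto T^R_k$.

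The cleanest route, and the one I would follow (this is Givental's argument, Proposition 7.3 of \cite{Giv2}), is to track the action of $\widehat R$ on the current $\widehat{\mathcal J}(z)$ itself, i.e.\ to compute $\widehat R\,\widehat{\mathcal J}(z)\,\widehat R^{-1}$. Since $\widehat A$ is quadratic, this conjugation is linear in $\widehat{\mathcal J}$ and is governed at the classical level by the symplectic transformation $e^A=R$: one gets $\widehat R\,\widehat{\mathcal J}(z)\,\widehat R^{-1}=\widehat{R^{-1}\mathcal J}(z)$ up to the central (2-cocycle) correction, which is exactly what feeds the $V^R_{kl}$ and the $\delta_k$. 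Concretely, the positive modes $\tilde T_k$ transform by \eqref{trans1} — this is the origin of $T_k\mapsto T^R_k$ — the shift in the dilaton term is \eqref{shiftsR} — this is the origin of the translation by $\delta_k$ — and the normal-ordering/commutator anomaly between the two halves $H_+$ and $H_-$ produces precisely the symmetric matrix $V^R(z,w)=\bigl(1-R(-w)R(-z)\bigr)/(w+z)$ contracting two derivatives. I would verify the bookkeeping by expanding $V^R(z,w)(w+z)=1-R(-w)R(-z)$ in powers of $w,z$, which shows $V^R_{kl}$ is polynomial in the $R_j$ and symmetric, and by checking the three factors have the stated commutation properties (the derivative-only factor and the translation factor commute; conjugating the translation through the substitution reproduces the shifted dilaton term).

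An alternative, more hands-on verification — useful as a sanity check and essentially self-contained given the explicit formula \eqref{Woper} — is to differentiate both sides of \eqref{Givf} in a parameter. Replace $R$ by $R_t$ with $\log R_t(z)=t\,a(z)$, so the left side is $e^{t\widehat A}Z$; differentiating in $t$ gives $\widehat A$ applied to the left side. On the right side one differentiates the three $t$-dependent factors (the matrix $V^{R_t}$, the shifts $\delta_k(t)$, and the substitution $T_k\mapsto T^{R_t}_k$) and checks that the resulting first-order operator in $t$ matches $\widehat A$ after commuting everything into a common ordering; since both sides agree at $t=0$ (where all three factors are trivial) and satisfy the same linear ODE in $t$, they agree for all $t$, in particular at $t=1$. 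This reduces the whole lemma to one Lie-algebra identity rather than a group computation.

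The main obstacle is purely the combinatorics of the central extension: keeping careful track of which commutators of $\partial/\partial T_m$ with $T_k$ are generated when the two halves of $\widehat{\mathcal J}$ are re-ordered, and confirming that their sum assembles exactly into the generating function $V^R(z,w)$ rather than into something that differs from it by, say, a contribution that should instead have been absorbed into the linear change of variables. Equivalently: the anomaly is a 2-cocycle, and the content of the lemma is that the particular representative produced by Weyl quantization with the dilaton shift $\tilde T_k=T_k-\hbar^{-1}\delta_{k,1}$ is exactly the one encoded by \eqref{Vgiv} and \eqref{shiftsR}. Since, however, the full statement is Givental's and is quoted here only as a black box (Lemma \ref{lem_G}), for our purposes it suffices to cite \cite{Giv2} and record that the factorization has the stated form; the verification above is included for the reader's convenience.
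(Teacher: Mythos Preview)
Your proposal is appropriate: the paper does not supply its own proof of this lemma at all --- it is stated with attribution to Givental and a citation of \cite{Giv2} (Proposition~7.3), and is used thereafter as a black box. Your sketch (conjugating the current by $\widehat R$ and reading off the linear change, the translation from the dilaton shift, and the anomaly term $V^R$; or alternatively the $t$-derivative ODE argument) is exactly the standard route and matches Givental's original argument, so there is nothing to compare against in the present paper beyond the citation itself.
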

In this expression one acts by the operator on the function $Z({\bf T})$, and then in the result substitutes ${T}_k$ with ${T}^R_k$.


\subsection{Heisenberg-Virasoro subgroup of $GL(\infty)$}\label{KPsec}

In this section we briefly describe the Heisenberg-Virasoro subgroup of $\GL(\infty)$ symmetry group of KP hierarchy, for more details see the companion paper \cite{H3_2} and references therein. Let
\be
\widehat{ J}(z):=\sum_{k\in \z} \frac{\widehat{J}_k}{z^{k+1}},
\ee
where
\be
\widehat{J}_k =
\begin{cases}
\displaystyle{\frac{\p}{\p t_k} \,\,\,\,\,\,\,\,\,\,\,\, \mathrm{for} \quad k>0},\\[2pt]
\displaystyle{0}\,\,\,\,\,\,\,\,\,\,\,\,\,\,\,\,\,\,\, \mathrm{for} \quad k=0,\\[2pt]
\displaystyle{-kt_{-k} \,\,\,\,\,\mathrm{for} \quad k<0.}
\end{cases}
\ee
We consider the generating function of  the Virasoro operators
\be
\normordboson \widehat{J}(z)^2\normordboson= 2 \sum_{k\in \z} 
\frac{ \widehat{L}_k}{z^{k+2}}
\ee 
or
\be\label{Vird}
\widehat{L}_m=\frac{1}{2} \sum_{a+b=-m}a b t_a t_b+ \sum_{k=1}^\infty k t_k \frac{\p}{\p t_{k+m}}+\frac{1}{2} \sum_{a+b=m} \frac{\p^2}{\p t_a \p t_b},
\ee
where, depending on the sign of $m$, only the first or the last summation appears. These formulas are the KP analogs of (\ref{Givcur}) and (\ref{Gquant}). The Heisenberg-Virasoro group ${\mathcal V}$ is generated by the operators $\widehat{J}_k$,  $\widehat{L}_k$ and a unit. Let us consider a Virasoro subgroup of ${\mathcal V}$ given by the operators of the form
\be\label{Goper}
\widehat{V}=\exp\left({\sum_{\z_{>0}} a_k \widehat{L}_k}\right) \in {\mathcal V}.
\ee
With any such operator one can associate the operator $e^{{\sum_{\z_{>0}} a_k{\mathtt l}_k}}$,
where
\be
{\mathtt l}_m=-z^m\left(z\frac{\p}{\p z}+\frac{m+1}{2}\right)
\ee
are the generators of the Witt subalgebra of the algebra of of diffeomorphisms on the circle.

For any set $a_k$, $k\in {\mathbb Z}_{>0}$, consider the series
\be\label{ffunct}
f(z):=e^{{\sum_{\z_{>0}} a_k{\mathtt l}_k}}\,  z \,  e^{-{\sum_{\z_{>0}} a_k{\mathtt l}_k}} \in z+z{\mathbb C}[\![z]\!].
\ee
Then
\be
h(z):=e^{-{\sum_{\z_{>0}} a_k{\mathtt l}_k}}\,  z \,  e^{{\sum_{\z_{>0}} a_k{\mathtt l}_k}} \in z+z{\mathbb C}[\![z]\!]
\ee
is an inverse series, $f(h(z))=h(f(z))=z$. There is a one-to-one correspondence between the space of series (\ref{ffunct}) and the subgroup (\ref{Goper}) of the Virasoro group. From the commutation relations of the Heisenberg-Virasoro algebra, 
\begin{align}
\left[\widehat{J}_k,\widehat{J}_m\right]&=k \delta_{k,-m},\nn\\
\left[\widehat{L}_k,\widehat{J}_m\right]&=-m \widehat{J}_{k+m},\\
\left[\widehat{L}_k,\widehat{L}_m\right]&=(k-m)\widehat{L}_{k+m}+\frac{1}{12}\delta_{k,-m}(k^3-k)\nn
\end{align}
 it follows that
\be\label{Gconj}
\widehat{V}\,\widehat{ {J}}(z)\, \widehat{V}^{-1}=h'(z)\widehat{ {J}}(h(z)).
\ee
Let $\widehat{V}_0$ be a part of the element of the upper-triangular subgroup (\ref{Goper}) that describes the linear change of variables,
\be\label{V0}
\widehat{V}_0:=\exp\left(\sum_{k=1}^\infty a_k \sum_{m=1}^\infty m t_m \frac{\p}{\p t_{k+m}}\right).
\ee
With the help of the Campbell-Baker-Hausdorff formula operator $\widehat{V}$ can  be factorized 
\be\label{Virex}
\widehat{V}=\widehat{V}_0\, \exp\left(\frac{1}{2}\sum_{k,m=1}^\infty v_{km}\frac{\p^2}{\p t_k \p t_m}\right)
\ee
for some $v_{ij}\in {\mathbb C}[\![a_1,a_2,a_3,\dots]\!]$. Below we will describe the space of all $v_{ij}$ which can be obtained in this way.

Using the commutation relations of the Virasoro algebra we can always factor out a term $\exp(a_1\widehat{L}_1)$,
\be
\exp\left({\sum_{\z_{>0}} a_k \widehat{L}_k}\right)=\exp(a_1\widehat{L}_1)\,\exp\left({\sum_{\z_{>1}} \tilde{a}_k \widehat{L}_k}\right)
\ee
for some new $\tilde{a}_k$. Operator ${\widehat L}_1=\sum_{k=1}^\infty k t_k \frac{\p}{\p t_{k+1}}$ is a first order differential operator, so $\exp(a_1\widehat{L}_1)$ gives only a linear change of variables. As we are going to identify the Givental operators with the elements of the Heisenberg-Virasoro group up to the linear change of variables, there is a certain arbitrariness in the choice of $a_1$.  We call the transformation, given by $\exp(a_1\widehat{L}_1)$, the {\em gauge transformation}, and the choice of $a_1$ will be called the choice of {\em gauge}. 
For the series (\ref{ffunct}) the gauge transformation is given by
\be\label{gauge}
f(z) \mapsto f(z/(1+a_1z)).
\ee


\subsection{Quadratic part of the Virasoro group operators}

To find the matrix $v_{km}$ in (\ref{Virex}) it is enough to act by both sides of (\ref{Virex}) on the function
\be
E:=\exp\left(\sum_{k=1}^\infty k  t_k q_k\right),
\ee
where ${\bf q}$ are auxiliary variables, independent on ${\bf t}$. Let us show that the coefficients $v_{nm}$ in (\ref{Virex}) are the so-called Grunsky coefficients of function $h(z)$.
Consider
\be
v(\eta_1,\eta_2):=\sum_{k,m=1}^\infty v_{km} \eta_1^k \eta_2^m.
\ee
\begin{lemma}\label{lem_log}
\be\label{vv}
v(\eta_1,\eta_2)=\log\left(\frac{h(\eta_1)-h(\eta_2)}{\eta_1-\eta_2}\right).
\ee
\end{lemma}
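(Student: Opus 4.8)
The plan is to carry out the program indicated in the paragraph preceding the lemma: apply both sides of (\ref{Virex}) to the function $E=\exp(\sum_{k\ge1}kt_kq_k)$, whose key property is $\partial_{t_k}E=kq_kE$, so that any polynomial differential operator applied to $E$ produces $E$ times an explicit function of the $t$'s and of the auxiliary variables $q$. First I would write $E=\exp(\sum_{k\ge1}q_k\widehat{J}_{-k})\cdot 1$ using $\widehat{J}_{-k}=kt_k$, and observe that $\widehat{V}\cdot 1=1$, since every $\widehat{L}_m$ with $m>0$ has all its terms with a derivative on the right and hence kills constants. Consequently
\be
\widehat{V} E=\widehat{V}\,e^{\sum_{k}q_k\widehat{J}_{-k}}\,\widehat{V}^{-1}\cdot 1=\exp\Bigl(\sum_{k\ge1}q_k\,\widehat{V}\widehat{J}_{-k}\widehat{V}^{-1}\Bigr)\cdot 1 ,
\ee
and by (\ref{Gconj}) --- extracting the coefficient of $z^{-k}$ from $\widehat{V}\widehat{J}(z)\widehat{V}^{-1}=h'(z)\widehat{J}(h(z))$ and substituting $z\mapsto h(z)$ --- one gets $\widehat{V}\widehat{J}_{-k}\widehat{V}^{-1}=\frac{1}{2\pi i}\oint f(u)^{-k}\widehat{J}(u)\,du$ with $f=h^{-1}$, hence $\sum_{k}q_k\widehat{V}\widehat{J}_{-k}\widehat{V}^{-1}=\frac{1}{2\pi i}\oint G(f(u))\widehat{J}(u)\,du$, where $G(\zeta):=\sum_{k\ge1}q_k\zeta^{-k}$.

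Next I would separate $\widehat{J}(u)=\sum_m\widehat{J}_mu^{-m-1}$ into its annihilation part ($\widehat{J}_m=\partial_{t_m}$, $m>0$) and creation part ($\widehat{J}_{-j}=jt_j$, $j>0$); the operator above is then $X+Y$, where $X=\sum_{m\ge1}[u^m]G(f(u))\,\partial_{t_m}$ is a constant-coefficient shift and $Y=\sum_{j\ge1}j[u^{-j}]G(f(u))\,t_j$ is a multiplication operator, and the commutator $c:=[X,Y]=\sum_{n\ge1}n\,[u^{n}]G(f(u))\,[u^{-n}]G(f(u))$ is a scalar, quadratic in the $q$'s, hence central. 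The Baker--Campbell--Hausdorff identity $e^{X+Y}=e^Xe^Ye^{-c/2}$ then gives $\widehat{V} E=e^{c/2}\exp(\sum_j\beta_jt_j)$ for suitable $\beta_j$, i.e. $\widehat{V} E$ equals $e^{c/2}$ times the exponential of something linear in the $t$'s. On the other side of (\ref{Virex}) one has $\exp(\tfrac12\sum v_{km}\partial_{t_k}\partial_{t_m})E=\exp(\tfrac12\sum v_{km}km\,q_kq_m)E$, and since $\widehat{V}_0$ is first order in the $t$'s it contributes only the exponential of something linear in $t$ as well; comparing the parts of $\log\widehat{V} E$ that are independent of $t$ therefore forces $c=\sum_{k,m\ge1}v_{km}km\,q_kq_m$.

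It then remains to evaluate $c$. Using $\sum_{n\ge1}n\,u_2^{\,n-1}u_1^{-n-1}=(u_1-u_2)^{-2}$, the sum becomes the double residue $c=\frac{1}{(2\pi i)^2}\oint\!\oint\frac{G(f(u_1))G(f(u_2))}{(u_1-u_2)^2}\,du_1\,du_2$ (expanded in the region $|u_2|<|u_1|$), and after the change of variables $u_i=h(\zeta_i)$ this equals $\frac{1}{(2\pi i)^2}\oint\!\oint G(\zeta_1)G(\zeta_2)\frac{h'(\zeta_1)h'(\zeta_2)}{(h(\zeta_1)-h(\zeta_2))^2}\,d\zeta_1\,d\zeta_2$. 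The elementary identity
\be
\frac{h'(\zeta_1)h'(\zeta_2)}{(h(\zeta_1)-h(\zeta_2))^2}=\partial_{\zeta_1}\partial_{\zeta_2}\log\bigl(h(\zeta_1)-h(\zeta_2)\bigr)=\frac{1}{(\zeta_1-\zeta_2)^2}+\partial_{\zeta_1}\partial_{\zeta_2}\log\frac{h(\zeta_1)-h(\zeta_2)}{\zeta_1-\zeta_2}
\ee
splits off the singular part; because $G(\zeta)$ has only strictly negative powers of $\zeta$, the $(\zeta_1-\zeta_2)^{-2}$ term integrates to zero, whereas $\frac{1}{2\pi i}\oint G(\zeta)\zeta^{k-1}d\zeta=q_k$ for $k\ge1$ turns the remaining term into $\sum_{k,m\ge1}km\,L_{km}\,q_kq_m$, where $L_{km}$ are the Taylor coefficients of $\log\frac{h(\eta_1)-h(\eta_2)}{\eta_1-\eta_2}$. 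Comparing with $c=\sum v_{km}km\,q_kq_m$ yields $v_{km}=L_{km}$ for all $k,m\ge1$, which is (\ref{vv}).

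The main difficulty is not conceptual but entirely a matter of bookkeeping --- getting every sign and change of variables right: the check $\widehat{V}\cdot1=1$; obtaining $\widehat{V}\widehat{J}_{-k}\widehat{V}^{-1}=\frac{1}{2\pi i}\oint f(u)^{-k}\widehat{J}(u)\,du$ with $f$ (and not $h$); the sign in $e^{X+Y}=e^Xe^Ye^{-c/2}$; the sign in $\partial_{\zeta_1}\partial_{\zeta_2}\log(h(\zeta_1)-h(\zeta_2))=h'(\zeta_1)h'(\zeta_2)/(h(\zeta_1)-h(\zeta_2))^2$; and the point that $v$ carries only indices $\ge 1$, so that the contributions $\log(h(\eta_i)/\eta_i)$ to $\log\frac{h(\eta_1)-h(\eta_2)}{\eta_1-\eta_2}$ that are independent of one of the variables simply do not occur in $v$ and need not be matched. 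The formal contour manipulations (the vanishing of the singular term, the exchange of summation and residues) are routine in the vertex-operator/Bogoliubov calculus.
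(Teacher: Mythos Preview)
Your proof is correct and follows essentially the same route as the paper: both apply the two sides of (\ref{Virex}) to $E$, use $\widehat{V}\cdot1=1$ together with the conjugation formula (\ref{Gconj}) to turn $\widehat{V}E$ into $e^{X+Y}\cdot1$ with $[X,Y]$ a central scalar, extract that scalar via BCH, and then identify it with $\sum km\,v_{km}q_kq_m$. The only difference is cosmetic --- in the final contour computation the paper manipulates the sum $\sum_j j\,\tilde q_j\tilde q_{-j}$ directly, whereas you package the same calculation through the Bergman-kernel identity $\partial_{\zeta_1}\partial_{\zeta_2}\log\bigl(h(\zeta_1)-h(\zeta_2)\bigr)=h'(\zeta_1)h'(\zeta_2)/(h(\zeta_1)-h(\zeta_2))^2$ and split off the singular part; both land on the same Grunsky coefficients.
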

\begin{remark}
This formula was known to the experts a long time ago, and can be proven via the Campbell-Baker-Hausdorff formula. Here we provide its proof for completeness.
\end{remark}
\begin{proof}
On one hand, the action of the right hand side of (\ref{Virex}) on $E$ yields
\be\label{actiononB}
\widehat{V}\cdot E =\widehat{V}_0 \cdot \exp\left(\sum_{k=1}^\infty k t_k q_k+\frac{1}{2}\sum_{k,m=1}^\infty k m\, v_{km}q_k q_m \right).
\ee
On the other hand 
\begin{equation}
\begin{split}
\widehat{V} \cdot E&=\exp\left(\sum_{k=1}^\infty k \left(\widehat{V}   t_k  \widehat{V}^{-1}\right) q_k\right)\cdot 1\\
&=\exp\left( \sum_{k=1}^\infty \left(k  t_k \tilde{q}_k + \tilde{q}_{-k}\frac{\p}{\p t_k}\right)\right)\cdot 1,
\end{split}
\end{equation}
where from (\ref{Gconj}) we have
\be
\tilde{q}_k:=\frac{1}{2\pi i }\oint_{\infty}\left(\sum_{m=1}^\infty \frac{q_m}{f(z)^m}\right) \frac{dz}{z^{k+1}}, \,\,\,\,\,\,\,\, k \in {\mathbb Z}. 
\ee
Hence
\begin{equation}
\begin{split}
\widehat{V}\cdot E&=\exp\left( \sum_{k=1}^\infty \left(k  t_k \tilde{q}_k +\frac{1}{2} k \tilde{q}_k\tilde{q}_{-k}\right)\right)\\
&=\widehat{V}_0 \cdot \exp\left(\sum_{k=1}^\infty \left( k t_k q_k+\frac{1}{2} k \tilde{q}_k\tilde{q}_{-k}\right)\right).
\end{split}
\end{equation}
Comparing this expression with (\ref{actiononB}) we conclude 
\be
\sum_{k,m=1}^\infty km\,\,  v_{km}q_kq_m= \sum_{k=1}^\infty k \tilde{q}_k\tilde{q}_{-k}.
\ee
Therefore,
\begin{equation}
\begin{split}
v_{km}&=\frac{1}{km}\sum_{j=1}^\infty \frac{j}{(2\pi i)^2}\oint_{\infty} \frac{dz}{z^{j+1} f(z)^k}\oint_\infty \frac{w^{j-1}d w}{f(w)^m}\\
&=-\frac{1}{km}\sum_{j=1}^\infty\frac{1}{j} \frac{1}{(2\pi i)^2} \oint_{\infty} \frac{d z^{-j}}{f(z)^k}\oint_{\infty} \frac{d w^j }{f(w)^m}\\
&= \frac{1}{(2\pi i)^2}  \oint_{\infty}  \frac{d z}{z^{k+1}} \oint_{\infty} \frac{d w}{w^{m+1}} \log\left(1-\frac{h(w)}{h(z)}\right)\\
&= \frac{1}{(2\pi i)^2}  \oint_{\infty}  \frac{d z}{z^{k+1}} \oint_{\infty} \frac{d w}{w^{m+1}} \log\left(\frac{h(z)-h(w)}{z-w}\right).
\end{split}
\end{equation}
Here in the transition from the second to the third line we use the change of variables from $z$ and $w$ to $f(z)$ and $f(w)$. To derive the last line we use the positivity of  $k$ and $m$, and the function there is chosen in such way that it possess formal Taylor series expansion
\be
 \log\left(\frac{h(z)-h(w)}{z-w}\right)\in {\mathbb C}[\![z,w]\!].
\ee
This completes the proof.
\end{proof}


\subsection{Equivalence of Givental and Virasoro group elements}

Let us compare the subgroup  (\ref{Goper}) of the Virasoro group and Givental group, given by Lemma \ref{lem_G}. First, we identify
\be
T_k\equiv(2k+1)!!t_{2k+1}.
\ee
Second, we consider the action of two groups on the Fock space of functions of odd variables ${\bf t}^o=\{t_1,t_3,t_5,\dots\}$. On this space for the element of the Virasoro group $\widehat{V}$ one can neglect all terms with even $k$ or $m$ in (\ref{Virex}) and all terms with even values of $k+m$ in (\ref{V0}).
Therefore, on this space the action of two operators will coincide, up to linear change and translation of variables, if and only if
\be\label{const}
V_{km}^{R} =(2k+1)!! (2m+1)!! v_{{2k+1}{2m+1}}
\ee
for all $k,m\in {\mathbb Z}_{\geq 0}$. 
\begin{remark}
There is no restriction for the elements $v_{km}$ when $k$ or $m$ is even. However, we will see that the constraint (\ref{const}) is already very restrictive.
\end{remark}
\begin{remark}
Arguments of this section are closely related to the standard manipulations in  the context of TR/GD, often referred to as the {\em Laplace transform}, see, e.g., \cite{TV,Eynard2,DBOSS}. 
\end{remark}

 Let
\be\label{xz}
x(z):=\frac{f(z)^2}{2}=\frac{z^2}{2}+z^3{\mathbb C}[\![z]\!],
\ee
where $f(z)$ describes the Virasoro group element by (\ref{ffunct}) and
 \be
 N(z):=\frac{z}{x'(z)}=1+\sum_{k\in \z_{>1}}n_k z^k.
 \ee
We fix the gauge by $a_1=0$,  which immediately leads to $n_1=0$. Then we have
\begin{lemma}
From (\ref{const}) it follows that $n_k=0$ for $k>3$.
\end{lemma}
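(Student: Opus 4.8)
The plan is to translate the arithmetic constraint (\ref{const}) into a functional-analytic statement about the two generating series $V^R(z,w)$ and $v(\eta_1,\eta_2)$, and then exploit the rigidity coming from the fact that the left-hand side is built from an \emph{odd} symplectic series $R(z)$ while the right-hand side is the logarithm of a difference quotient of an \emph{arbitrary} series $h(z)$. First I would set up the dictionary precisely: by Lemma \ref{lem_log} we have $v(\eta_1,\eta_2)=\log\big((h(\eta_1)-h(\eta_2))/(\eta_1-\eta_2)\big)$, and by (\ref{Vgiv})--(\ref{gV}) we have $V^R(z,w)=(1-R(-w)R(-z))/(w+z)$. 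The relation (\ref{const}) is exactly the statement that, after the rescaling $T_k\equiv(2k+1)!!\,t_{2k+1}$ (equivalently, after substituting appropriate odd powers and absorbing the double factorials into the change of variable encoded by $x(z)=f(z)^2/2$), these two bivariate series must agree on the odd-odd part. So the first genuine step is to rewrite both sides as contour integrals in the variable $z$ (the ``Laplace transform'' manipulation alluded to in the remark) so that the double-factorials disappear and (\ref{const}) becomes an identity of analytic germs; one then differentiates in $z$ and $w$ to kill the logarithm on the right and obtain a rational constraint.

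The key computation is to extract from (\ref{const}), after one derivative in each variable, a closed identity of the schematic form
\[
\frac{\partial^2}{\partial z\,\partial w}\,V^R(z,w)\ \big|_{\text{odd-odd}}\ =\ \frac{\partial^2}{\partial z\,\partial w}\log\!\left(\frac{h(z)-h(w)}{z-w}\right)\ \big|_{\text{odd-odd}}\,,
\]
and then use that the right-hand side, written in terms of $h'$, is
$\dfrac{h'(z)h'(w)}{(h(z)-h(w))^2}-\dfrac{1}{(z-w)^2}$. The crucial observation is that $x(z)=f(z)^2/2$ is an \emph{even} function of $z$ up to the square (it lives in $\frac{z^2}{2}+z^3\mathbb{C}[\![z]\!]$), and $h$ is the inverse of $f$; so the variable that naturally diagonalizes everything is $x$, and $N(z)=z/x'(z)$ measures the deviation of $x$ from the pure quadratic $z^2/2$. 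I would re-express the identity entirely in terms of $N(z)$ (and hence $n_k$), using the change of variables $z\mapsto x(z)$ in the contour integrals as in the proof of Lemma \ref{lem_log}. The symplectic oddness $R(z)R(-z)=1$ forces $V^R(z,w)$ to have a very special structure — in particular, $(w+z)V^R(z,w)=1-R(-z)R(-w)$ is a product minus one, so its odd-odd part is heavily constrained — and matching this against the $x$-series expansion produces, order by order in $z,w$, a triangular system of equations on $n_2,n_3,n_4,\dots$. The content of the Lemma is that once $n_1=0$ is fixed by the gauge, the equations at the first few orders determine $n_2,n_3$ freely but force $n_k=0$ for all $k\ge 4$; concretely I expect the coefficient of $z^k w^1$ (or $z^k w^3$) in the matched identity to give a recursion expressing $n_{k}$ in terms of lower $n_j$ that, because of the rigidity of the right-hand side, collapses to $n_k=0$ for $k\ge4$.

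Practically, I would proceed as follows. \emph{Step 1:} Using $T_k\equiv(2k+1)!!\,t_{2k+1}$ and the integral formula for $v_{km}$ from the proof of Lemma \ref{lem_log}, rewrite (\ref{const}) as the single identity
$\sum_{k,m\ge0}V^R_{km}w^kz^l\leftrightarrow$ the $x$-side contour integral, and perform the substitution to variables $x(z),x(w)$; this is the step where the odd-variable restriction and the double factorials conspire to give a clean ``Laplace-transformed'' identity involving $N$. \emph{Step 2:} Differentiate to remove the logarithm and obtain a rational/series identity whose two sides are, respectively, a simple expression in $R$ (hence in $V^R_{km}$, polynomial in $R_k$) and an expression in $h',h$ rewritten via $N$ and $x'$. \emph{Step 3:} Expand both sides as power series in $z$ and $w$ and read off the coefficient equations; organize them by total degree. \emph{Step 4:} Solve the low-degree equations: the gauge $a_1=0$ gives $n_1=0$; the next equations express $R_1,R_2$ (equivalently $n_2,n_3$) as the two free parameters; and the equation at the first ``overdetermined'' order forces $n_4=0$, after which an induction on degree, using that every new $n_k$ appears linearly with nonzero coefficient in exactly one new equation that has no room to be satisfied nontrivially, gives $n_k=0$ for all $k>3$.

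The main obstacle I anticipate is \emph{Step 1/Step 2}: carrying the double-factorial weights correctly through the change of variables $z\mapsto x(z)$ and verifying that the odd-odd projection of $V^R$ matches the $x$-transform of $v$ without spurious even contributions. The bookkeeping here (signs from $R(-z)$, the $(w+z)$ versus $(z-w)$ denominators, and the fact that $x$ is quadratic-plus-higher so $\sqrt{2x}=f$ introduces branch issues one must handle formally) is where errors hide. Once that identity is in hand in terms of $N(z)$, the rest is a finite, essentially mechanical, triangular computation, and the vanishing $n_k=0$ for $k>3$ should drop out of the first few orders together with a short induction.
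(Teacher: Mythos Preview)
Your plan follows the same broad arc as the paper --- pass to a Laplace-transformed identity, rewrite things in terms of $N(z)$, and extract coefficient equations on the $n_k$ --- and you correctly flag that the product structure in $1-R(-z)R(-w)$ is the source of rigidity and that one should look at low $w$-slices. So the outline is sound.

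There are, however, two places where your plan is either different from the paper or genuinely incomplete. First, the paper does \emph{not} differentiate in $z,w$ to remove the logarithm; it integrates by parts in the Gaussian integration variables. This lands directly on
\[
R(-w)R(-z)=\frac{1}{2\pi\sqrt{zw}}\int_{\gamma^2}\frac{dx_1\,d\eta_2-d\eta_1\,dx_2}{\eta_1-\eta_2}\,e^{-x_1/z-x_2/w},
\]
and now the decisive step --- which you mention but do not carry out --- is to \emph{eliminate $R$ via the factorization}: set $w=0$ to read off $R(-z)$ as a single integral, then subtract the product of the one-variable integrals from the two-variable one. What remains is a vanishing condition involving only $x(\eta)$, hence only the $n_k$; after one more change of variables and integration by parts it becomes the clean statement that the odd-in-$\eta_1$, odd-in-$\eta_2$ part of $\sum_{k\ge 2} n_k\sum_{j+m=k-2} h(\eta_1)^j h(\eta_2)^m$ vanishes. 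Your differentiation route does not obviously produce this factorized form, and until $R$ is eliminated your ``triangular system in $n_2,n_3,n_4,\dots$'' is really a coupled system in the $R_j$ and the $n_k$, for which the claimed triangularity has no evident justification.

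Second, even after the pure constraint on the $n_k$ is isolated, the endgame is not the ``each new $n_k$ appears linearly in exactly one new equation'' picture you expect. One extracts \emph{two} sequences of equations --- the coefficients of $\eta_2^{1}$ and of $\eta_2^{3}$, call them $M_1=0$ and $M_3=0$ --- and the induction requires a case split on whether $n_3$ vanishes. In each case the leading terms of $M_1$ and $M_3$ at the relevant order give an overdetermined $2\times 1$ system (for instance $\frac{2k_0-2}{5}\,n_3 n_{2k_0+1}+n_{2k_0+4}=0$ together with $\frac{2k_0-4}{5}\,n_3 n_{2k_0+1}+n_{2k_0+4}=0$), whose difference forces the next putatively nonzero $n_k$ to vanish. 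So the induction is a two-equation contradiction argument, not a single triangular recursion; your Step~4 would need to be sharpened accordingly.
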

\begin{proof}
From (\ref{Vgiv}) and Lemma \ref{lem_log} we see that the relation (\ref{const}) is equivalent to
\be\label{sra}
\frac{1-R(-w)R(-z)}{w+z}=\frac{1}{2\pi (zw)^{3/2}}\int_{\gamma^2} d\eta_1  d\eta_2\,\eta_1 \eta_2  e^{-\frac{\eta_1^2}{2z}-\frac{\eta_2^2}{2w}} \log\left(\frac{h(\eta_1)-h(\eta_2)}{\eta_1-\eta_2}\right).
\ee
Here $\gamma$ is a local steepest descent contour in the vicinity of $\eta_1=\eta_2=0$, and we consider the asymptotic expansion of the integral at small values of $|z|$ and $|w|$. Let $x_k=x(\eta_k)$. The right hand side of (\ref{sra}) after the change of variables of integration $\eta_k \mapsto f(\eta_k)$ is equal to
\begin{equation}
\begin{split}\label{newrr}
&\frac{1}{2\pi (zw)^{3/2}}\int_{\gamma^2} dx_1  dx_1   \log\left(\frac{\eta_1-\eta_2}{f(\eta_1)-f(\eta_2)}\right) e^{-\frac{x_1}{z}-\frac{x_2}{w}}\\
&=-\frac{1}{2\pi (z+w)\sqrt{zw}}\int_{\gamma^2} dx_1  dx_2  \log\left(\frac{\eta_1-\eta_2}{f(\eta_1)-f(\eta_2)}\right) \left(\frac{\p}{\p x_1}+\frac{\p}{\p x_2}\right)e^{-\frac{x_1}{z}-\frac{x_2}{w}}\\
&=\frac{1}{2\pi (z+w)\sqrt{zw}}\int_{\gamma^2} dx_1  dx_2\,  e^{-\frac{x_1}{z}-\frac{x_2}{w}}   \left(\frac{\p}{\p x_1}+\frac{\p}{\p x_2}\right) \log\left(\frac{\eta_1-\eta_2}{f(\eta_1)-f(\eta_2)}\right),
\end{split}
\end{equation}
where the last equality follows form the integration by parts. From the identity
\be
\frac{\p x_1}{\p \eta_1}\frac{\p x_2}{\p \eta_2}\,\left(\frac{\p}{\p x_1}+\frac{\p}{\p x_2}\right)\log\left(\frac{\eta_1-\eta_2}{f(\eta_1)-f(\eta_2)}\right)= f'(\eta_1)f'(\eta_2)-\frac{\frac{\p x_1}{\p \eta_1} -\frac{\p x_2}{\p \eta_2}}{\eta_1-\eta_2}
\ee
it follows that the last line of (\ref{newrr}) is equal to 
\be
\frac{1}{2\pi (z+w)\sqrt{zw}}\int_{\gamma^2} \left(d\sqrt{2x_1} d\sqrt{2x_2}  - \frac{d x_1 d \eta_2-d \eta_1 d x_2}{\eta_1-\eta_2}\right) \,  e^{-\frac{x_1}{z}-\frac{x_2}{w}}.  
\ee
Comparing it with (\ref{sra}), we see that the latter is equivalent to
\be\label{RR}
R(-w)R(-z)= \frac{1}{2\pi  \sqrt{zw}}\int_{\gamma^2}  \frac{d x_1d \eta_2-d \eta_1 d x_1}{\eta_1-\eta_2} e^{-\frac{x_1}{z}-\frac{x_2}{w}}.
\ee
The left hand side factorizes, so should the right hand side. The factors can be easily found if one puts $w=0$,
\be\label{Rz}
R(-z)= \frac{1}{\sqrt{2\pi z} }\int_{\gamma}  \frac{dx (\eta) }{\eta} e^{-\frac{x(\eta)}{z}}.
\ee
Hence $x(\eta)$ should satisfy the equation
\be\label{inta}
 \frac{1}{2\pi  \sqrt{zw}}\int_{\gamma^2} \left( \frac{d x_1d \eta_2-d \eta_1 d x_2}{\eta_1-\eta_2}  -\frac{d x_1d x_2}{\eta_1\eta_2}\right)e^{-\frac{x_1}{z}-\frac{x_2}{w}}=0.
 \ee
Let us solve this equation for $x(\eta)$. First, we rewrite the left hand side as
\begin{equation}
\begin{split}\label{APR}
  &\frac{1}{2\pi  \sqrt{zw}}\int_{\gamma^2} \frac{d\eta_1  d\eta_2}{N(\eta_1)N(\eta_2)} \left(\frac{\eta_1N(\eta_2)-\eta_2N(\eta_1)}{\eta_1-\eta_2}-1\right) e^{-\frac{x_1}{z}-\frac{x_2}{w}}\\
&=- \frac{1}{2\pi  \sqrt{zw}}\int_{\gamma^2} d\eta_1  d\eta_2\, \eta_1 \eta_2 \sum_{k=2}^\infty n_k \sum_{j+m=k-2} h(\eta_1)^j h(\eta_2)^m e^{-\frac{\eta_1^2}{2z}-\frac{\eta_2^2}{2w}},
\end{split}
\end{equation}
where we apply the change of integration variables $\eta_k \mapsto h(\eta_k)$ and integrate by parts.

For any $g(\eta)\in \mathbb C[\![\eta]\!]$ let 
\be
\left[g(\eta)\right]_\eta:=\frac{1}{2}\left(g(\eta)-g(-\eta)\right)
\ee
be its odd part, and for a series of two variables $g(\eta_1,\eta_2)\in {\mathbb C}[\![\eta_1,\eta_2]\!]$
\be
\left[g(\eta_1,\eta_2)\right]_{\eta_1,\eta_2}:=\frac{1}{4}\left(g(\eta_1,\eta_2)-g(-\eta_1,\eta_2)-g(\eta_1,-\eta_2)+g(-\eta_1,-\eta_2)\right).
\ee
Then (\ref{inta}) is equivalent to
\be\label{Msu}
\left[\sum_{k=2}^\infty n_k \sum_{j+m=k-2} h(\eta_1)^j h(\eta_2)^m\right]_{\eta_1,\eta_2}=0,
\ee
so we need to prove that it implies $n_k=0$ for $k>3$. It is obvious that it implies $n_4=0$. Let
\be
M_j:=\left[\sum_{k=5}^\infty n_k h(\eta_1)^{k-j-2}\right]_{\eta_1}.
\ee

It is easy to find the first terms of the series $h(x)$:
\be
h(\eta)=\eta+\frac{n_2}{4}\eta^3+\frac{n_3}{5}\eta^4+\frac{5n_2^2}{96}\eta^5+O(\eta^6).
\ee
Then
\be
\left[\sum_{k=5}^\infty n_k \sum_{j+m=k-2} h(\eta_1)^j h(\eta_2)^m\right]_{\eta_1,\eta_2}=M_1\eta_2^1+(M_3+\frac{n_2}{4}M_1) \eta_2^3+O(\eta_2^5).
\ee
Hence, from (\ref{Msu}) it follows that
\be\label{Ms}
M_1=0,\,\,\,\,\,
M_3=0.
\ee

Let us prove by induction that these two equations have the only solution $n_k=0$ for $k>4$. Assume that $n_k=0$ for all ${k=4,5,\dots,m}$. If $m$ is odd, then from the equation $M_1=0$ it follows that $n_{m+1}=0$, so we can assume that $m$ is even, $m=2k_0$. Assume that $n_{2k_0+1}\neq 0$ for some $k_0\geq 2$. There are two possibilities:

$\bullet$ Let $n_3=0$. Then
\be
h(\eta)+h(-\eta)=2 \frac{n_{2k_0+1}}{2k_0+3}\eta^{2k_0+2}+O(\eta^{2k_0+2})
\ee
and from the equation $M_1=0$ it follows that $n_{2k}=0$ for all $k\in\{3,\dots,2k_0\}$. Then 
\begin{equation}
\begin{split}
M_1&=\left(\frac{2k_0-2}{2k_0+3}n_{2k_0+1}^2+n_{4k_0+2}\right)\eta_1^{4k_0-1}+O(\eta_1^{4k_0+1}),\\
M_3&=\left(\frac{2k_0-4}{2k_0+3}n_{2k_0+1}^2+n_{4k_0+2}\right)\eta_1^{4k_0-3}+O(\eta_1^{4k_0-1}).
\end{split}
\end{equation}
Hence from (\ref{Ms}) we have $n_{2k_0+1}=0$, which contradicts the assumption. 

$\bullet$ Let $n_3\neq0$. Then from the first equation in (\ref{Ms}) we immediately see that $n_{2k_0+2}=0$, and
\begin{equation}
\begin{split}
M_1&=\left(\frac{2k_0-2}{5}n_3n_{2k_0+1}+n_{2k_0+4}\right)\eta_1^{2k_0+1}+O(\eta_1^{2k_0+3}),\\
M_3&=\left(\frac{2k_0-4}{5}n_3n_{2k_0+1}+n_{2k_0+4}\right)\eta_1^{2k_0-1}+O(\eta_1^{2k_0+2}).
\end{split}
\end{equation}
Hence from (\ref{Ms}) we have $n_{2k_0+1}=0$, which contradicts the assumption. This completes the proof.
\end{proof}

Therefore, to identify the rank one Givental operators with the elements of the Heisenberg-Virasoro group we can restrict ourself to a two-dimensional family of the Virasoro operators (\ref{Goper}), identification of the translation part will be considered later. Let us choose a more convenient parametrization of this family. For this purpose we use the gauge transformation (\ref{gauge}). Let us denote by $\widehat{V}_{n_1,n_2,n_3}$ the element of the Virasoro subgroup (\ref{Goper}) with function $f(z)$  given by
\be
\frac{f(z)^2}{2}=\int_{0}^z \frac{\eta d \eta}{1+n_1\eta+n_2\eta^2+n_3\eta^3}.
\ee
\begin{lemma}
\be
 \widehat{V}_{n_1,n_2,0}=e^{\frac{n_1}{3}\widehat{L}_1} \widehat{V}_{0,n_2-n_1^2/3,-n_1n_2/3+2n_1^3/27}.
\ee
\end{lemma}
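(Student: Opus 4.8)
The plan is to translate the asserted group identity into an identity for the associated series $f(z)$ of (\ref{ffunct}) — equivalently for $x(z)=f(z)^2/2$, equivalently for $N(z)=z/x'(z)$ — and then to verify that identity by a short chain-rule computation. Both sides of the claimed equality lie in the subgroup (\ref{Goper}), so by the one-to-one correspondence between that subgroup and the series (\ref{ffunct}) it suffices to check that the two sides have the same associated series.

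First I would record the effect on the series of a product and of a left factor $e^{a_1\widehat L_1}$. Writing $f_{\widehat V}$ for the series (\ref{ffunct}) attached to $\widehat V$ and $h_{\widehat V}=f_{\widehat V}^{-1}$ for its inverse, iterating (\ref{Gconj}) for a product $\widehat V_1\widehat V_2$ and using $(h_1\circ h_2)'=h_1'(h_2)\,h_2'$ gives $h_{\widehat V_1\widehat V_2}=h_{\widehat V_1}\circ h_{\widehat V_2}$, hence $f_{\widehat V_1\widehat V_2}=f_{\widehat V_2}\circ f_{\widehat V_1}$; and for the element $e^{a_1\widehat L_1}$ of (\ref{Goper}) one has $f_{e^{a_1\widehat L_1}}(z)=e^{a_1{\mathtt l}_1}z\,e^{-a_1{\mathtt l}_1}=z/(1+a_1z)$, consistently with (\ref{gauge}). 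Hence the identity $\widehat V_{n_1,n_2,0}=e^{\frac{n_1}{3}\widehat L_1}\widehat V_{0,n_2',n_3'}$, with $n_2'=n_2-n_1^2/3$ and $n_3'=-n_1n_2/3+2n_1^3/27$, becomes — after squaring and halving the series — equivalent to the single relation
\[
x_{n_1,n_2,0}(z)=x_{0,n_2',n_3'}\!\left(\frac{z}{1+\tfrac{n_1}{3}z}\right),
\qquad
x_{m_1,m_2,m_3}(z):=\int_0^z\frac{\eta\,d\eta}{1+m_1\eta+m_2\eta^2+m_3\eta^3}.
\]

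To prove this relation I would differentiate both sides. With $u(z)=z/(1+\tfrac{n_1}{3}z)$ we have $u'(z)=(1+\tfrac{n_1}{3}z)^{-2}$ and $u(z)(1+\tfrac{n_1}{3}z)=z$, so from $x_{0,n_2',n_3'}'(w)=w/(1+n_2'w^2+n_3'w^3)$,
\[
\frac{d}{dz}\,x_{0,n_2',n_3'}(u(z))
=\frac{z}{(1+\tfrac{n_1}{3}z)^3\bigl(1+n_2'u(z)^2+n_3'u(z)^3\bigr)}
=\frac{z}{(1+\tfrac{n_1}{3}z)^3+n_2'z^2(1+\tfrac{n_1}{3}z)+n_3'z^3}.
\]
Since $x_{n_1,n_2,0}'(z)=z/(1+n_1z+n_2z^2)$ and both sides vanish at $z=0$, the relation is equivalent to the polynomial identity $(1+\tfrac{n_1}{3}z)^3+n_2'z^2(1+\tfrac{n_1}{3}z)+n_3'z^3=1+n_1z+n_2z^2$. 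Its coefficients of $z^0$ and $z^1$ agree automatically; the coefficient of $z^2$ gives $\tfrac{n_1^2}{3}+n_2'=n_2$ and that of $z^3$ gives $\tfrac{n_1^3}{27}+\tfrac{n_1 n_2'}{3}+n_3'=0$, which are solved precisely by the values of $n_2'$ and $n_3'$ in the statement. This proves the lemma.

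I do not anticipate a genuine difficulty: the argument is elementary coefficient matching once the bookkeeping is in place. The only point requiring care is fixing, via (\ref{Gconj}) and (\ref{gauge}), the side on which $e^{a_1\widehat L_1}$ acts and the sign in its series $z/(1+a_1z)$ (equivalently, the order in $f_{\widehat V_1\widehat V_2}=f_{\widehat V_2}\circ f_{\widehat V_1}$).
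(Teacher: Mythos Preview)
Your proposal is correct and follows essentially the same approach as the paper: reduce the group identity via the one-to-one correspondence with the series (\ref{ffunct}) and the gauge rule (\ref{gauge}) to the integral identity $x_{n_1,n_2,0}(z)=x_{0,n_2',n_3'}(z/(1+\tfrac{n_1}{3}z))$, then verify by differentiation. The paper states exactly this integral identity and invokes (\ref{gauge}); your version just spells out the composition law $f_{\widehat V_1\widehat V_2}=f_{\widehat V_2}\circ f_{\widehat V_1}$ and the coefficient check more explicitly.
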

\begin{proof}
Because of the one-to-one correspondence between the group (\ref{Goper}) and the space of formal series $z+z^2{\mathbb C}[\![z]\!]$, the statement of the lemma follows from the identity
\be
\int_{0}^{\frac{z}{1+n_1z/3} }\frac{\eta d \eta}{1+(n_2-n_1^2/3)\eta^2+(2n_1^3/27-n_1n_2/3)\eta^3}=\int_{0}^z \frac{\eta d \eta}{1+n_1\eta+n_2\eta^2}
\ee
and the transformation rule (\ref{gauge}).
\end{proof}
Its easy to see, that for any $\tilde{n}_2$ and $\tilde{n}_3$ the equations
\be
\tilde{n}_2=n_2-n_1^2/3,\,\,\,\,\,\,\,\,\,\,\,\,\,\, \tilde{n}_3=-n_1n_2/3+2n_1^3/27
\ee
have solutions. Hence, we can work with the 2-parameter subspace of the Virasoro group, described by
\be
\frac{f(z)^2}{2}=\int_{0}^z \frac{\eta d \eta}{1+n_1\eta+n_2\eta^2}.
\ee
Let us parametrize it by $p$ and $q$
\be
1+n_1z+n_2z^2=(1+\sqrt{p+q}z)(1+q z/\sqrt{p+q}),
\ee
or, equivalently, 
\be
q=n_2,\,\,\,\,\,\ p=\frac{n_1^2-2n_2\pm \sqrt{n_1^4-4n_1^2n_2}}{2}.
\ee
We assume that $p+q\neq 0$, then $x$ is defined by
\be\label{qwer}
d x=f(z) d f(z)=\frac{z dz}{(1+\sqrt{p+q}z)(1+q z/\sqrt{p+q})}.
\ee 
In this parametrization, if $p\neq 0$ and $q\neq 0$ we have
\be\label{fpq}
x(z)=\frac{p+q}{pq}\log\left(1+\frac{qz}{\sqrt{p+q}}\right)-\frac{1}{p}\log(1+\sqrt{p+q}z).
\ee
For $q=0$ it degenerates to
\be\label{xq}
x(z)=\frac{z}{\sqrt{p}}-\frac{1}{p}\log(1+\sqrt{p}z),
\ee
and for $p=0$ it degenerates to
\be
x(z)=\frac{1}{q}\log(1+\sqrt{q}z)-\frac{1}{\sqrt{q}}\frac{z}{1+\sqrt{q} z}.
\ee

Let us denote
\be
I(z):=\frac{1}{\sqrt{2\pi z} }\int_{\gamma}  \frac{d x(\eta) }{\eta} e^{-\frac{x(\eta)}{z}}.
\ee
It appears that if the factorization constraint (\ref{RR}) is satisfied, then corresponding $R(z)$ satisfies the symplectic condition. 
Namely, let us consider a two-parametric family of functions satisfying this  condition,
\be\label{pqfam}
R_{q,p}(z)=\exp\left({-\sum_{k=1}^\infty \frac{B_{2k}}{2k(2k-1)}\left(p^{2k-1}+q^{2k-1}-\left(\frac{pq}{p+q}\right)^{2k-1}\right)z^{2k-1}}\right).
\ee
Here $B_{2k}$ are the Bernoulli numbers
\be
\frac{xe^x}{e^x-1}=1+\frac{x}{2}+\sum_{k=1}^\infty \frac{B_{2k} x^{2k}}{(2k)!}.
\ee

\begin{lemma}\label{idi}
\be
I(z)=R_{q,p}(-z).
\ee
\end{lemma}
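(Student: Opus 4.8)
The plan is to evaluate the integral $I(z)$ directly via the steepest-descent (Laplace) method and match the resulting asymptotic series with $R_{q,p}(-z)$. First I would write $I(z) = \frac{1}{\sqrt{2\pi z}}\int_\gamma \frac{x'(\eta)}{\eta}\,e^{-x(\eta)/z}\,d\eta$ using \eqref{qwer}, so that $x'(\eta)/\eta = \big((1+\sqrt{p+q}\,\eta)(1+q\eta/\sqrt{p+q})\big)^{-1}$. Since $x(\eta) = \eta^2/2 + O(\eta^3)$ has a nondegenerate critical point at $\eta = 0$ with $x(0)=0$, the standard Laplace expansion along the steepest-descent contour $\gamma$ gives a formal series in $z$; the prefactor $\frac{1}{\sqrt{2\pi z}}$ is exactly the normalization that makes the leading term equal to $1$, consistent with $R_{q,p}(0)=1$. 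The task is then to identify all higher-order coefficients.

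The cleanest route is to take the logarithmic derivative. Using the change of variables $u = x(\eta)$ (valid as a formal/analytic change near $\eta=0$), one has $I(z) = \frac{1}{\sqrt{2\pi z}}\int e^{-u/z}\,\frac{d\eta}{\eta}$, and differentiating an appropriate generating identity in $z$ reduces the computation of $\log I(z)$ to extracting the asymptotic expansion of a single-variable integral. Concretely, I would show that $\frac{\partial}{\partial z}\log I(z)$ (or better, $z^2\partial_z \log(\sqrt{z}\,I(z))$) can be computed by integration by parts in $\eta$, converting each power of $z$ into a contribution governed by the Taylor coefficients of $\eta/x(\eta) = N(\eta)$ around $\eta=0$. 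Because in our two-parameter family $x(\eta)$ is given explicitly by \eqref{fpq} (and its degenerations \eqref{xq} and the $p=0$ case), the function $\eta\,x'(\eta)$ and hence all the relevant residues are rational in $\eta$ with poles only at $\eta = -1/\sqrt{p+q}$ and $\eta = -\sqrt{p+q}/q$. Summing the geometric-type contributions from these two simple poles produces precisely the two terms $p^{2k-1}$ and $q^{2k-1}$ in \eqref{pqfam}, while the "diagonal" contribution from the critical point itself (equivalently, a $-\log$ Jacobian term $\log(\eta/f(\eta)) = \log N(\eta)/\ldots$ expanded at $\eta=0$) contributes the third term $-(pq/(p+q))^{2k-1}$; the coefficient $\frac{B_{2k}}{2k(2k-1)}$ arises from the classical asymptotic expansion $\frac{1}{\sqrt{2\pi z}}\int_\gamma e^{-(\eta - \log(1+\eta))/z}\,\frac{d\eta}{1+\eta} \sim \exp\!\big(-\sum_k \frac{B_{2k}}{2k(2k-1)}z^{2k-1}\big)$, which is the well-known Stirling-type / Kontsevich-Witten one-point generating identity and may be invoked as known (see \cite{Kaza} and references therein). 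The only nontrivial input beyond this is that the odd-ness of $\log R_{q,p}(z)$, required by the symplectic condition $R(z)R(-z)=1$, is automatic here since the sum in \eqref{pqfam} runs over odd powers $z^{2k-1}$ only.

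A convenient alternative — which I would present if the direct pole-counting is messy — is to reduce the general $(p,q)$ case to the already-understood Stirling integral by an explicit substitution. Since $x(\eta)$ in \eqref{fpq} is an affine combination of $\log(1+\sqrt{p+q}\,\eta)$ and $\log(1+q\eta/\sqrt{p+q})$, one can write $e^{-x(\eta)/z}$ as a product of two Stirling-type exponentials with rescaled arguments; changing variables separately in a neighbourhood of each pole and applying the single-variable identity three times (once at $\eta=-1/\sqrt{p+q}$ with effective parameter $p$, once at $\eta=-\sqrt{p+q}/q$ with effective parameter $q$, and once encoding the overall Jacobian with effective parameter $pq/(p+q)$) assembles $R_{q,p}(-z)$ term by term. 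The degenerate cases $q=0$ and $p=0$ then follow by taking limits, using \eqref{xq} and the $p=0$ formula, and noting that $R_{q,p}$ in \eqref{pqfam} is continuous in $(p,q)$ on the locus $p+q\neq 0$.

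The main obstacle I anticipate is bookkeeping the contour $\gamma$ and the sign/branch choices: the steepest-descent contour for $x(\eta)=\eta^2/2 + \cdots$ must be tracked through the change of variables $\eta\mapsto x(\eta)$, and when one deforms $\gamma$ to pick up contributions near the poles $\eta=-1/\sqrt{p+q}$, $\eta=-\sqrt{p+q}/q$ one must check that these are genuine saddle contributions in the asymptotic (as opposed to spurious) and that the orientation/normalization factors $\frac{1}{\sqrt{2\pi z}}$ distribute correctly. Everything is to be understood at the level of formal power series in $z$ (as is standard in this TR/GD context), which tames the analytic subtleties, but getting the three exponents $p^{2k-1}$, $q^{2k-1}$, $-(pq/(p+q))^{2k-1}$ with the correct common prefactor $-\frac{B_{2k}}{2k(2k-1)}$ and no stray cross-terms is the delicate point; I would verify it by checking the first two orders ($z^1$ and $z^3$) explicitly against the Taylor expansion of $I(z)$ computed from \eqref{qwer} before invoking the general identity.
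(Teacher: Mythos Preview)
Your overall instinct --- reduce $I(z)$ to a Stirling-type asymptotic expansion so that the Bernoulli numbers appear automatically --- is correct, and you have even correctly guessed the provenance of the three exponents $p^{2k-1}$, $q^{2k-1}$, $-(pq/(p+q))^{2k-1}$. But the mechanism you describe is not right, and the key step is missing. The contour $\gamma$ is a local steepest-descent contour through the single critical point $\eta=0$; the logarithmic singularities of $x(\eta)$ at $\eta=-1/\sqrt{p+q}$ and $\eta=-\sqrt{p+q}/q$ are not on $\gamma$ and do not contribute as separate saddle or residue terms, so ``deforming $\gamma$ to pick up contributions near the poles'' or ``applying the single-variable identity three times'' at three locations is not a valid decomposition of this asymptotic integral. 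Your logarithmic-derivative route, as stated, does not produce the required factorization either.

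The paper's argument supplies the missing idea: after using $\int_\gamma dx\, e^{-x/z}=0$ to trade $1/\eta$ for $1/\eta - q\sqrt{p+q}$, the integrand becomes $(1+\sqrt{p+q}\,\eta)^{-1+1/(pz)}(1+q\eta/\sqrt{p+q})^{-(p+q)/(pqz)}\,d\eta$; a single \emph{linear} substitution $\eta\mapsto \frac{p}{q\sqrt{p+q}}\eta - \frac{1}{\sqrt{p+q}}$ then converts $I(z)$ (for $p,q\neq 0$) into the Euler beta integral $B(1/(pz),1/(qz))$ up to an explicit prefactor. The identity $B(a,b)=\Gamma(a)\Gamma(b)/\Gamma(a+b)$ is what produces exactly three Stirling expansions, with arguments $1/(pz)$, $1/(qz)$, and $1/(pz)+1/(qz)=(p+q)/(pqz)$, whence the three terms in \eqref{pqfam} with the common coefficient $B_{2k}/(2k(2k-1))$ and no cross-terms. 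The degenerate cases $q=0$ and $p=0$ reduce directly to a single gamma integral (respectively to the $q=0$ case after $\eta\mapsto -\eta/(1+\sqrt{q}\eta)$). Without the beta-function step your proposal has no concrete way to establish the exact coefficients or the absence of cross-terms; checking the first two orders, as you suggest, is a sanity check but not a proof.
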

\begin{proof}
We need to consider three different cases.

1) Let $p\neq 0$ and $q\neq 0$. Then $x(z)$ is given by (\ref{fpq}). From the integration by parts we have
\be\label{qry}
\int_{\gamma} dx(\eta)\,  e^{-\frac{x(\eta)}{z}}=0,
\ee
hence
\begin{equation}
\begin{split}
I(z)&=\frac{1}{\sqrt{2\pi z} }\int_{\gamma} dx(\eta)  \left(\frac{1}{\eta}-q \sqrt{p+q}\right) e^{-\frac{x(\eta)}{z}}\\
&= \frac{1}{\sqrt{2\pi z} }\int_{\gamma}  \frac{d\eta}{(1+\sqrt{p+q}\eta)^{1-\frac{1}{pz}}(1+q \eta/\sqrt{p+q})^{\frac{p+q}{pqz}}}.
\end{split}
\end{equation}
Let us make a change of integration variable $\eta\mapsto \frac{p}{q\sqrt{p+q}}\eta-\frac{1}{\sqrt{p+q}}$, then
\be\label{intsm}
I(z)=\frac{1}{\sqrt{2 \pi z (p+q)}}\frac{(p+q)^{\frac{1}{pz}+\frac{1}{qz}}}{q^\frac{1}{pz}p^\frac{1}{qz}}\int_{\tilde{\gamma}} \frac{\eta^{\frac{1}{pz}-1} d\eta}{(1+\eta)^{\frac{1}{pz}+\frac{1}{qz}}}.
\ee
Here $\tilde{\gamma}$ is the new contour, corresponding to the saddle point at $\eta=q/p$. We can deform the integration contour in such a way that the asymptotic expansion does not change, 
which allows us to identify it with the asymptotic expansion of the beta-function
\be
B(a,b)=\int_{0}^\infty \frac{\eta^{a-1} dx}{(1+\eta)^{a+b}}
\ee
for $a=1/pz$ and $b=1/qz$. From Stirling's expansion
\be\label{Stir}
\Gamma(z^{-1})=\sqrt{2\pi}e^{-z}z^{z-1/2} e^{\sum_{k=1}^\infty \frac{B_{2k}}{2k(2k-1)}z^{2k-1}}
\ee
one has the asymptotic expansion of the beta-function
\be
B({(qz)}^{-1},{(pz)}^{-1})=\frac{\sqrt{2\pi z(p+q)} q^\frac{1}{pz}p^\frac{1}{qz}}{(p+q)^{\frac{1}{pz}+\frac{1}{qz}}} e^{\sum_{k=1}^\infty \frac{B_{2k}}{2k(2k-1)}\left((pz)^{2k-1}+(qz)^{2k-1}-(\frac{pq}{p+q}z)^{2k-1}\right)}.
\ee
The statement of the lemma follows from the comparison of this expansion with (\ref{intsm}).

2) Let $q=0$. Then $x(z)$ is given by (\ref{xq}), and
\be\label{Int1}
I(z)=\frac{1}{\sqrt{2\pi z}}\int_{\gamma} \frac{d\eta}{(1+\sqrt{p}\eta)^{1-\frac{1}{pz}}}e^{-\frac{\eta}{z\sqrt{p}}}.
\ee
After a change of the integration variable, $\eta\mapsto z\sqrt{p}\eta-\frac{1}{\sqrt{p}}$, we get
\be
I(z)=\frac{1}{\sqrt{2\pi zp}}e^\frac{1}{zp}(zp)^\frac{1}{zp}\int_{\tilde{\gamma}} d\eta \eta^{\frac{1}{pz}-1}e^{-\eta}.
\ee
The asymptotic expansion of the integral coincides with the asymptotic expansion of the gamma-function
\be
\Gamma(z^{-1})=\int_0^\infty d\eta \eta^{\frac{1}{z}-1}e^{-\eta},
\ee
and the statement of lemma follows from the Stirling expansion (\ref{Stir}).

3) Let $p=0$. This case can be reduced to the previous one. Namely, for $p=0$ we have
\be
I(z)=\frac{1}{\sqrt{2\pi z}}\int_{\gamma} \frac{d\eta}{(1+\sqrt{q}\eta)^{2+\frac{1}{qz}}}e^{\frac{1}{\sqrt{q}z}\frac{\eta}{1+\sqrt{q}\eta}}
\ee
 which reduces to (\ref{xq}) after the change of integration variable $\eta\mapsto -\eta/(1+\sqrt{q}\eta)$ and identification of $q$ and $p$ if we apply (\ref{qry}). This completes the proof.
\end{proof}

Below we assume that coefficients ${\bf a}$ are fixed by (\ref{ffunct}), where the function $f(z)$ is given by (\ref{qwer}). Let us consider the operator
\be\label{Dop}
D_{q,p}=-e^{{\sum_{k\in\z_{>0}} a_k{\mathtt l}_k}}\,  \frac{1}{z} \frac{\p}{\p z} \,  e^{-{\sum_{k\in \z_{>0}} a_k{\mathtt l}_k}}
=-\frac{\p}{\p x},
\ee
or, equivalently,
\be
D_{q,p}=-\frac{(1+\sqrt{p+q}z)(1+q z/\sqrt{p+q})}{z} \frac{\p}{\p z}.
\ee
Then the functions 
\be
\phi_k(z):=D_{q,p}^k\cdot \frac{1}{z}
\ee
define a change of variables if we associate $k t_k$ with $z^{-k}$ and $T_k^{q,p}$ with $\phi_k(z)$. This change of variables can be described by the recursion
\be\label{changeof}
T_0^{q,p}({\bf t})=t_1,\\
T_k^{q,p}({\bf t})=\left(q\widehat{L}_{0}  +\frac{2q+p}{\sqrt{p+q}}\widehat{L}_{-1}+\left(\widehat{L}_{-2}-\frac{t_1^2}{2}\right)\right)\cdot T_{k-1}^{q,p}.
\ee
Let us prove that the linear change of variables (\ref{changeof}) gives a transformation, required for identification of the Givental and Heisenberg-Virasoro operators.

\begin{lemma}\label{Chang_T}
\be
 T^R_k({\bf T}_{q,p}({\bf t}))= \widehat{V}_0\cdot (2k+1)!! t_{2k+1}, \,\,\,\,\,\, k\in {\mathbb Z_{\geq 0}}.
 \ee
\end{lemma}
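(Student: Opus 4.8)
The plan is to show that the left-hand side $T^R_k(\mathbf{T}_{q,p}(\mathbf{t}))$ satisfies the same defining recursion as the right-hand side $\widehat V_0\cdot(2k+1)!!\,t_{2k+1}$, together with the same initial value, and then invoke uniqueness of the solution of that recursion. On the right, the string of operators $(2k+1)!!\,t_{2k+1}$ is generated by applying $\widehat V_0$ to the sequence $t_1,3!!\,t_3,5!!\,t_5,\dots$; conjugating by $\widehat V_0$ turns this into a statement about how the operator $z\,(z\partial_z)^{-1}$-type ladder, dressed by $e^{\sum a_k\mathtt l_k}$, acts — i.e. it translates into the action of $D_{q,p}$ on $1/z$ via (\ref{Dop}) and the identification $kt_k\leftrightarrow z^{-k}$. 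So the real content is to check that the recursion (\ref{changeof}), read off from iterating $D_{q,p}=-\partial/\partial x$ on $1/z$, matches the recursion obtained on the Givental side from the change of shifted variables (\ref{trans1}) after the Virasoro conjugation (\ref{Gconj}).

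First I would make the dictionary explicit. Under $T_k\equiv(2k+1)!!\,t_{2k+1}$ and $kt_k\leftrightarrow z^{-k}$, the generating series $\sum_k \tilde T_k z^k$ on the Givental side corresponds, after the Laplace-type transform of Section 2.3 (the same manipulation used to derive (\ref{sra})), to a contour integral against $e^{-x(\eta)/z}$; the factor $R(-z)$ multiplying it in (\ref{trans1}) is exactly the function $I(z)=R_{q,p}(-z)$ produced in Lemma \ref{idi}. Thus (\ref{trans1}) becomes, term by term, a statement that $\phi_k(z)=D_{q,p}^k\cdot\frac1z$ is the transform of the $k$-th shifted time — which is precisely how the variables $T^{q,p}_k$ were defined. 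Concretely I would verify the base case $T^R_0(\mathbf T_{q,p})=t_1=\widehat V_0\cdot t_1$ directly (here $\widehat V_0$ acts trivially on $t_1$ since it only raises indices), and then show the inductive step: applying $D_{q,p}=-\partial_x$ corresponds, after translating back through $kt_k\leftrightarrow z^{-k}$, to applying the first-order part of $q\widehat L_0+\tfrac{2q+p}{\sqrt{p+q}}\widehat L_{-1}+(\widehat L_{-2}-\tfrac{t_1^2}{2})$, which is exactly (\ref{changeof}); simultaneously one checks that under $\widehat V_0$-conjugation multiplication by $(2k+1)!!\,t_{2k+1}$ advances to $(2k+3)!!\,t_{2k+3}$ via the same operator, using the commutation relations (\ref{Gconj}) and the fact that $h(z)$ is the compositional inverse of $f(z)$ with $x=f^2/2$.

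The cleanest way to organize the inductive step is to compute $\widehat V_0\,(2k+1)!!\,t_{2k+1}\,\widehat V_0^{-1}$ acting on $1$: by (\ref{Gconj}) and the change of variables $z\mapsto f(z)$, the operator $\sum_k c_k t_k$ with $\sum_k c_k z^{-k}$ a given series gets conjugated into $\sum_k c_k \tilde t_k$ where $\sum_k \tilde t_k z^{-k}$ is obtained by substituting $z\mapsto f(z)$ and extracting the principal part; matching this against $\phi_{k}(z)$ and using $x'(z)=z/N(z)=z\cdot\big((1+\sqrt{p+q}z)(1+qz/\sqrt{p+q})\big)^{-1}$ from (\ref{qwer}) reproduces the three-term operator in (\ref{changeof}). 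I expect the main obstacle to be bookkeeping: tracking the $(2k+1)!!$ normalization factors and the restriction to odd times through the Laplace transform so that the operator $-\partial_x$ on the $z$-side maps precisely to the displayed combination of $\widehat L_0,\widehat L_{-1},\widehat L_{-2}$ (and no stray $\widehat L_1$ or quadratic pieces survive), together with justifying that the formal contour-integral identities can be differentiated and inverted term by term. Once the two recursions and their initial data are seen to coincide, the lemma follows by induction on $k$.
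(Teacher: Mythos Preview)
Your proposal gathers the right ingredients --- the dictionary $kt_k\leftrightarrow z^{-k}$, the conjugation $D_{q,p}=e^{\sum a_k\mathtt l_k}(-\tfrac1z\partial_z)e^{-\sum a_k\mathtt l_k}$, and the identification of the $R(-z)$ factor with the asymptotic integral of Lemma~\ref{idi} --- but the inductive scaffolding you build around them has a gap. Write $r_j=[z^j]R(-z)$. Since $T^R_k=\sum_{j=0}^k r_{k-j}T_j$ and $T^{q,p}_j=\mathcal D^j\cdot t_1$ with $\mathcal D$ the operator in (\ref{changeof}), the left-hand side obeys
\[
T^R_{k+1}(\mathbf T_{q,p})=\mathcal D\cdot T^R_k(\mathbf T_{q,p})+r_{k+1}\,t_1,
\]
so the recursion is \emph{not} just $\mathcal D$; there is an inhomogeneous term coming from the Taylor coefficients of $R(-z)$. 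For the induction to close you would have to show that $\widehat V_0\cdot(2k+1)!!\,t_{2k+1}$ satisfies the \emph{same} inhomogeneous recursion, with the same $r_{k+1}t_1$ correction. That is exactly where the hard content lies --- it is equivalent to knowing that the contour integral $I(z)$ reproduces $R(-z)$, i.e. Lemma~\ref{idi}, together with the symplectic identity $R(w)R(-w)=1$ --- and your sketch does not explain how this term-by-term matching would be established inductively.

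The paper bypasses this difficulty by working with the full generating series in an auxiliary variable~$w$ rather than inducting on~$k$. One applies $\widehat V_0^{-1}$ (realized on the $z$-side as $e^{-\sum a_k\mathtt l_k}$) to $R(-w)\sum_m w^m D_{q,p}^m\cdot\tfrac1z$, uses the conjugation to convert $D_{q,p}$ into $-\tfrac1z\partial_z$ acting on $1/h(z)$, and then evaluates the resulting principal part as a contour/Laplace integral. The key point is that this computation \emph{produces} a global factor $R(w)$, which cancels the $R(-w)$ already present by the symplectic condition $R(w)R(-w)=1$. So the cancellation that your induction would need to verify coefficient-by-coefficient happens all at once at the level of generating functions. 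If you want to salvage an inductive argument, you would effectively have to redo this generating-function cancellation in disguise; it is cleaner to abandon the induction and compute the generating series directly, as the paper does.
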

\begin{proof}
Let us prove an equivalent relation
\be
 \widehat{V}_0^{-1}\cdot \sum_{k=0}^\infty T^R_k({\bf T}_{q,p}({\bf t})) w^k = \sum_{k=0}^\infty (2k+1)!! t_{2k+1} w^k.
\ee
Let us associate $k t_k$ with $z^{-k}$. Then $ \sum_{k=0}^\infty T^R_k w^k $ is associated with
\be
R(-w)\sum_{k=0}^\infty w^k \left(-\frac{1}{z}\frac{\p}{\p z}\right)^k \cdot \frac{1}{z}, 
\ee
and, by  (\ref{Dop}),  the sum $\sum_{m=0}^\infty T^R_m({\bf T}_{q,p}({\bf t})) w^m$ is associated with
\be\label{gty}
R(-w)\sum_{m=0}^\infty w^m D_{q,p}^m \cdot \frac{1}{z}.
\ee

Let us omit the factor $R(-w)$ for a moment and act on (\ref{gty})  by $e^{-{\sum_{\z_{>0}} a_k{\mathtt l}_k}}$: 
\begin{equation}
\begin{split}\label{intt}
\left(e^{-{\sum_{\z_{>0}} a_k{\mathtt l}_k}}\cdot \sum_{m=0}^\infty w^m D_{q,p}^m \cdot \frac{1}{z}\right)_-&=\left( \sum_{m=0}^\infty w^m \left(-\frac{1}{z}\frac{\p}{\p z}\right)^m e^{-{\sum_{\z_{>0}} a_k{\mathtt l}_k}}\cdot \frac{1}{z}\right)_-\\
&=\left( \sum_{m=0}^\infty w^m \left(-\frac{1}{z}\frac{\p}{\p z}\right)^m \frac{1}{h(z)}\right)_-\\
&= \sum_{k,m=0}^\infty \frac{1}{2 \pi i} \frac{w^m}{z^{2k+1}}\oint \frac{d\eta}{h(\eta)}  \left(\frac{\p}{\p \eta}\frac{1}{\eta}\right)^m \eta^{2k}\\
&= \sum_{k,m=0}^\infty \frac{(2k-1)!!}{2\pi i} \frac{w^{k+m}}{z^{2k+1}}  \oint \frac{d\eta}{h(\eta)} \frac{(-1)^m (2m-1)!!}{\eta^{2m}}\\
&= \sum_{k=0}^\infty (2k-1)!! \frac{w^{k}}{z^{2k+1}}  \frac{1}{\sqrt{-2 \pi z}}  \int_{\gamma} dx \frac{x}{h(x)} e^{\frac{x^2}{2z}}\\
&=R(w) \sum_{k=0}^\infty(2k-1)!! \frac{w^{k}}{z^{2k+1}},
\end{split}
\end{equation}
where  for any series $\left(\sum_{k\in\z} b_k z^k\right)_-:=\sum_{k\in\z_{<0}} b_k z^k$. Hence, 
\be
\left(e^{-{\sum_{\z_{>0}} a_k{\mathtt l}_k}}\cdot R(-w)\sum_{m=0}^\infty w^m D_{q,p}^m \cdot \frac{1}{z}\right)_-=\sum_{k=0}^\infty(2k-1)!! \frac{w^{k}}{z^{2k+1}},
\ee
and two linear transformations coincide.  
\end{proof}

Let us also introduce
\be\label{vdef}
v_k:=[z^k]\int_{0}^z (f(\eta)-y(\eta)) d x(\eta), 
\ee
where 
\be\label{ygr}
y(z) =\int^z \frac{d x(\eta)}{\eta}.
\ee
For $p\neq 0$ and $q\neq 0$ 
\be
y(z) =\frac {\sqrt {p+q}}{p}   \left( \log  \left( 1+\sqrt {p+q}z \right)  -\log  \left(1+\frac{qz}{ \sqrt {p+q}} \right) \right).
\ee
For $q=0$ it reduces to
\be
y(z)=\frac {1}{\sqrt{p}} \log  \left( 1+\sqrt {p }z \right)
\ee
and for $p=0$ it degenerates to
\be
y(z)=\frac{z}{1+\sqrt{q}z}.
\ee
It is easy to see that $v_k=0$ for $k<4$.

Having in mind that the coefficients $v_k$ are given by (\ref{vdef}), let us remind the reader the main ingredients of the following theorem, given by (\ref{Vird}), (\ref{changeof}), (\ref{ffunct}), (\ref{qwer}), and (\ref{pqfam}), namely the Virasoro operators
\be
\widehat{L}_m= \sum_{k=1}^\infty k t_k \frac{\p}{\p t_{k+m}}+\frac{1}{2} \sum_{a+b=m} \frac{\p^2}{\p t_a \p t_b},\,\,\,\,\,\,\,\,\, m\geq0,
\ee
the linear change of variables
\be
T_0^{q,p}({\bf t})=t_1,\\
T_k^{q,p}({\bf t})=\left(q\widehat{L}_{0}  +\frac{2q+p}{\sqrt{p+q}}\widehat{L}_{-1}+\left(\widehat{L}_{-2}-\frac{t_1^2}{2}\right)\right)\cdot T_{k-1}^{q,p},
\ee
and the coefficients $a_k$ are given implicitly through the formal series 
\be
f(z):=e^{{\sum_{k\in\z_{>0}} a_k{\mathtt l}_k}}\,  z \,  e^{-{\sum_{k\in\z_{>0}} a_k{\mathtt l}_k}} \in z+z{\mathbb C}[\![z]\!],
\ee
which is defined by
\be
f(z) d f(z)=\frac{z dz}{(1+\sqrt{p+q}z)(1+q z/\sqrt{p+q})},
\ee 
and a formal series
\be\label{pqfam1}
R_{q,p}(z)=\exp\left({-\sum_{k=1}^\infty \frac{B_{2k}}{2k(2k-1)}\left(p^{2k-1}+q^{2k-1}-\left(\frac{pq}{p+q}\right)^{2k-1}\right)z^{2k-1}}\right),
\ee
which defines an element of the Givental group. Then
\begin{theorem}\label{th_RL}
Rank one Givental operator coincides up to a  linear change of variables with the element of the Heisenberg-Virasoro group of symmetries of KP hierarchy if and only if $R(z)$ belongs to a family (\ref{pqfam1}). For this family
\be\label{RL}
\left.\widehat{R}_{q,p}\cdot Z({\bf t}^{o})\right|_{{\bf T}={\bf T}^{q,p}({\bf t})}=  e^{\hbar^{-1} \sum_{k=4} v_k \frac{\p}{\p t_{k}}}e^{\sum_{k\in\z_{>0}} a_k \widehat{L}_k} \cdot Z({\bf t}^{o})
\ee
for any function $Z({\bf t}^{o})$.
\end{theorem}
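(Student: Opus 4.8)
I would prove the two assertions of the theorem --- the classification of admissible $R(z)$ and the explicit formula (\ref{RL}) --- separately. The classification is essentially a repackaging of the lemmas already established in this section. The formula is obtained by feeding Givental's factorization (Lemma~\ref{lem_G}) into the dictionary between the two groups built above: the quadratic identity (\ref{const}) (equivalently Lemma~\ref{lem_log}), the factorization (\ref{Virex}) of $\widehat V$, and the linear change of variables of Lemma~\ref{Chang_T}. What remains after that is a single residue identity controlling the dilaton shift, and this is where the real work sits.

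For the ``if and only if'': by the discussion above, a rank one Givental operator $\widehat R$ and an element of the Virasoro subgroup (\ref{Goper}) can be identified, up to a linear change and a translation of the odd times, exactly when their quadratic parts agree, i.e. when (\ref{const}) holds for some admissible $f(z)$ (equivalently for some $a_k$, modulo the gauge choice $a_1$). By (\ref{Vgiv}) and Lemma~\ref{lem_log}, (\ref{const}) is equivalent to (\ref{sra}); the integral manipulations carried out above recast (\ref{sra}) as the factorized relation (\ref{RR}) together with the consistency equation (\ref{inta}), and (\ref{inta}) is in turn equivalent to (\ref{Msu}), hence to $n_k=0$ for all $k>3$. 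Using the gauge reduction (the $\widehat V_{n_1,n_2,0}$ lemma) and the reparametrization by $p,q$ this leaves precisely the two-parameter family with $N(z)=(1+\sqrt{p+q}\,z)(1+qz/\sqrt{p+q})$, and then (\ref{Rz}) combined with Lemma~\ref{idi} forces $R(-z)=I(z)=R_{q,p}(-z)$, i.e. $R=R_{q,p}$. Conversely, for $R=R_{q,p}$ I would take the member of this family with the corresponding $p,q$: its $N(z)$ is a polynomial of degree at most two, so (\ref{Msu}), hence (\ref{inta}), holds automatically; combined with Lemma~\ref{idi} and (\ref{Rz}) this supplies the factorization (\ref{RR}), hence (\ref{sra}), hence (\ref{const}), so the identification is possible. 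This settles the ``if and only if''.

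For (\ref{RL}), I would start from Lemma~\ref{lem_G} applied to $\widehat R=\widehat R_{q,p}$ and pass to the odd-time Fock space via $T_k=(2k+1)!!\,t_{2k+1}$, so that $\partial_{T_k}=\tfrac{1}{(2k+1)!!}\partial_{t_{2k+1}}$. Then the Givental quadratic operator $\exp\big(\tfrac12\sum_{i,j}V^R_{ij}\partial_{T_i}\partial_{T_j}\big)$ turns, by (\ref{const}), into $\exp\big(\tfrac12\sum_{i,j}v_{2i+1,2j+1}\partial_{t_{2i+1}}\partial_{t_{2j+1}}\big)$, which on functions of odd times is exactly $\widehat V_0^{-1}\widehat V=\widehat V_0^{-1}\exp\big(\sum_k a_k\widehat L_k\big)$ by (\ref{Virex}). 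The composition of the Givental linear change $T_k\mapsto T^R_k$ with the substitution $T_k\mapsto T^{q,p}_k({\bf t})$ of (\ref{changeof}) equals, by Lemma~\ref{Chang_T}, the substitution $T_k\mapsto\widehat V_0\cdot(2k+1)!!\,t_{2k+1}$. Using that $\widehat V_0$ is an algebra automorphism (the exponential of a derivation) fixing constants, and that it commutes with the constant-coefficient quadratic operator, one pulls $\widehat V_0$ outside the evaluation and obtains $\widehat R_{q,p}\cdot Z|_{{\bf T}={\bf T}^{q,p}({\bf t})}=\widehat V\cdot\exp\big(\hbar^{-1}\sum_{k\ge0}\tfrac{\delta_k}{(2k+1)!!}\partial_{t_{2k+1}}\big)\cdot Z$, with $\delta_k$ as in (\ref{shiftsR}). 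It then remains to move this dilaton-shift translation to the left across $\widehat V$; since $\widehat V=\widehat V_0\exp(\tfrac12\sum v_{km}\partial_k\partial_m)$ and the quadratic factor commutes with translations, this amounts to conjugating a translation by the linear change $\widehat V_0$, which produces again a translation, and one must check that it is precisely $\exp\big(\hbar^{-1}\sum_{k\ge4}v_k\partial_{t_k}\big)$ with $v_k$ as in (\ref{vdef}). Granting this, (\ref{RL}) follows.

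The only genuinely nontrivial step, and the one I expect to be the main obstacle, is this translation identity. Writing $\widehat V_0\cdot t_n=\sum_j c_{nj}t_j$, it is equivalent to $\delta_k=(2k+1)!!\sum_j c_{2k+1,j}v_j$ for all $k\ge0$, and, by Lemma~\ref{Chang_T} applied again on the generating series, to the closed-form identity $\sum_{m\ge0}T^{q,p}_m({\bf v})\,w^m=w\big(R_{q,p}(w)-1\big)$, where ${\bf v}=(v_k)$ and $T^{q,p}_m$ are the linear maps (\ref{changeof}) evaluated at ${\bf t}={\bf v}$. I would prove it by the same steepest-descent / Laplace-transform computation as in the proof of Lemma~\ref{Chang_T}, now retaining the dilaton-shift term: applying $e^{-\sum_k a_k{\mathtt l}_k}$ and extracting the relevant part yields a contour integral which, evaluated via $R_{q,p}(-z)=I(z)$ (Lemma~\ref{idi}), (\ref{Rz}) and the definition (\ref{ygr}) of $y(z)$, reproduces on one side the coefficients $v_k$ of $\int_0^z\big(f(\eta)-y(\eta)\big)\,dx(\eta)$ and on the other $\delta_k$ coming from $z\big(1-R_{q,p}(-z)\big)$. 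Carrying out this residue computation and keeping track of the various $(2k\pm1)!!$ factors and contour deformations is where the effort concentrates; the linear and quadratic parts are immediate once Lemma~\ref{lem_log}, Lemma~\ref{Chang_T} and the factorization (\ref{Virex}) are in place.
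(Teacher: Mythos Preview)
Your proposal is correct and takes essentially the same approach as the paper. The classification is assembled from the preceding lemmas exactly as you describe, and for (\ref{RL}) the paper likewise reduces everything to the translation identity (\ref{psiq}), which it proves by conjugating $\sum_k v_k z^k$ with $e^{-\sum a_k \mathtt{l}_k}$ and then applying the Gaussian/Laplace transform together with integration by parts, $dy=dx/\eta$, and (\ref{Rz})/Lemma~\ref{idi} --- precisely the computation you sketch for the dilaton-shift step.
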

\begin{proof}
We have already identified the quadratic parts of two operators and linear changes of variables in Lemmas \ref{idi} and \ref{Chang_T}.
It remains to identify the translations of the variables. Let us show that for any $Z({\bf t}^{o})$
\be\label{psiq} 
\widehat{V}_0^{-1}\left( \sum_{k=4}^\infty v_k \frac{\p}{\p t_{k}} \right) \widehat{V}_0 \cdot Z({\bf t}^{o})=  \sum_{k=2}^\infty \delta_k \frac{\p}{\p T_k}Z({\bf t}^{o}),
\ee
 where $\delta_k$'s are given by (\ref{shiftsR}). By definition,
\begin{equation}
\begin{split}
e^{-{\sum_{\z_{>0}} a_k{\mathtt l}_k}}\, \sum_{k=4}^\infty v_k z^k \,e^{{\sum_{\z_{>0}} a_k{\mathtt l}_k}} &=\int_{0}^{h(z)} (f(\eta)-y(\eta)) d x(\eta)\\
&=\int_{0}^{z} (\eta-y(h(\eta))) \eta d \eta.
\end{split}
\end{equation}
Using integration by parts we have
\begin{equation}
\begin{split}
\frac{1}{\sqrt{2\pi z}}\int_{\gamma} d \eta (\eta-y(h(\eta)))\eta e^{-\frac{\eta^2}{2z}}&= z - \frac{z}{\sqrt{2\pi z}}\int_{\gamma} d y(\eta)e^{-\frac{x(\eta)}{z}}\\
&=z(1-R(-z)).
\end{split}
\end{equation}
This completes the proof.
\end{proof}

Let us stress that this theorem essentially describes the equality of two operators, therefore the formula is true for any function $Z({\bf t}^{o})$ for which both sides make sense, not only for the solutions of the KdV hierarchy.

\begin{remark}
The identification of the Givental and Heisenberg-Virasoro operators (\ref{RL}) for $p=0$ was established in \cite{LiuWang}.
\end{remark}

It is natural to compare the higher rank Givental group with the symmetry group of the multicomponent KP hierarchy. However, for higher rank Givental groups there are other promising possibilities,  and one can try to identify the Givental operators with the operators from the symmetry groups of other solitonic integrable hierarchies. In particular, for rank two, one can consider 2D Toda lattice and extended Toda lattice. It would be interesting to find a relation between our analysis and the vertex operator analysis of the bilinear Hirota equations considered in \cite{FGM,GM} for the ADE type singularities.

\begin{remark}
For any semisimple cohomological field theory one can derive the Virasoro constraints by conjugation of the constraints for the KdV tau-functions with the Givental group element. However, for the cases, when after a linear change of variables the generating function coincides with a tau-function of integrable hierarchy, these constraints gain additional properties. Namely, after a linear change of variables, they belong to the  symmetry algebra of the hierarchy. Virasoro type constrains and their conjugation by the Givental operators can provide alternative way for analysis of the integrability and other interesting  properties.
\end{remark}


\section{Hodge integrals}\label{S_Hodge}

\subsection{Triple Hodge integrals}

Denote by~$\mathcal{M}_{g,n}$ the moduli space of all compact Riemann surfaces of genus~$g$ with~$n$ distinct marked points. Deligne and Mumford \cite{DM69} defined a natural compactification $\mathcal{M}_{g,n}\subset\overline{\mathcal{M}}_{g,n}$ via stable curves with possible nodal singularities. The moduli space $\overline{\mathcal{M}}_{g,n}$ is a non-singular complex orbifold of dimension~$3g-3+n$. It is empty unless the stability condition
\begin{gather}\label{closed stability}
2g-2+n>0
\end{gather}
is satisfied. We refer the reader to~\cite{DM69,HM98} for the basic theory.

In his seminal paper~\cite{Wit91}, Witten initiated new directions in the study of $\overline{\mathcal{M}}_{g,n}$. For each marking index~$i$ consider the cotangent line bundle ${\mathbb{L}}_i \rightarrow \overline{\mathcal{M}}_{g,n}$, whose fiber over a point $[\Sigma,z_1,\ldots,z_n]\in \overline{\mathcal{M}}_{g,n}$ is the complex cotangent space $T_{z_i}^*\Sigma$ of $\Sigma$ at $z_i$. Let $\psi_i\in H^2(\overline{\mathcal{M}}_{g,n},\mathbb{Q})$ denote the first Chern class of ${\mathbb{L}}_i$. We consider the intersection numbers
\begin{gather}\label{eq:products}
\<\tau_{a_1} \tau_{a_2} \cdots \tau_{a_n}\>_g:=\int_{\overline{\mathcal{M}}_{g,n}} \psi_1^{a_1} \psi_2^{a_2} \cdots \psi_n^{a_n}.
\end{gather}
The integral on the right-hand side of~\eqref{eq:products} vanishes unless the stability condition~\eqref{closed stability} is satisfied, all  $a_i$ are non-negative integers, and the dimension constraint $3g-3+n=\sum a_i$ holds true. Let $T_i$, $i\geq 0$, be formal variables and let
\be
\tau_{KW}:=\exp\left(\sum_{g=0}^\infty \sum_{n=0}^\infty \hbar^{2g-2+n}F_{g,n}\right),
\ee
where
\be
F_{g,n}:=\sum_{a_1,\ldots,a_n\ge 0}\<\tau_{a_1}\tau_{a_2}\cdots\tau_{a_n}\>_g\frac{\prod T_{a_i}}{n!}.
\ee
Witten's conjecture \cite{Wit91}, proved by Kontsevich \cite{Kon92}, states that the partition function~$\tau_{KW}$ becomes a tau-function of the KdV hierarchy after the change of variables~$T_n=(2n+1)!!t_{2n+1}$. Integrability immediately follows \cite{KMMMZ92} from Kontsevich's matrix integral representation, for more details see the companion paper \cite{H3_2}.

There are different interesting deformations of this tau-function. Let us consider the Hodge bundle $\mathbb E$, a rank $g$ vector bundle over ${\mathcal{M}}_{g,n}$. Let $\ch_k\in H^{2k}(\overline{\mathcal{M}}_{g,n})$ be the components of the Chern character of $\mathbb E$. According to Mumford, the even components of the Chern character of $\mathbb E$ identically vanish. Then we consider
\be\label{freecharac}
\tilde{\mathcal{F}}_{g,n}=\sum_{a_1,\ldots,a_n\ge 0}  \frac{\prod T_{a_i}}{n!} \int_{\overline{\mathcal{M}}_{g,n}} e^{\sum_{k=1}^\infty s_k \ch_{2k-1}} \psi_1^{a_1} \psi_2^{a_2} \cdots \psi_n^{a_n}.
\ee
We introduce a generating function
\be \label{generalgener}
Z({\bf T}; {\bf s})=\exp\left(\sum_{g=0}^\infty \sum_{n=0}^\infty \hbar^{2g-2+n}\tilde{\mathcal{F}}_{g,n}\right).
\ee
In general, this deformation does not preserve integrability, because it is easy to check that $Z({\bf T}; {\bf s})$ is not a tau-function of the KdV hierarchy in the variables $t_k$ anymore. Investigation of the integrable properties of $Z({\bf t}; {\bf s})$ is one of the main goals of this paper.

From Mumford's theorem \cite{Mum}  it follows that
\be\label{MuM}
Z({\bf T}; {\bf s})=\widehat{R}({\bf s})\cdot \tau_{KW}({\bf T}),
\ee
where 
\be
\widehat{R}({\bf s}):= \exp\left( {\sum_{k=1}^\infty \frac{B_{2k}}{(2k)!} s_{k} \widehat{W}_{k}}\right)
\ee
and operators $\widehat{W}_{k}$ are given by (\ref{Woper}).

For the Hodge bundle over $\overline{\mathcal{M}}_{g,n}$ let $\lambda_i=c_i({\mathbb E})$. Then
\be
\Lambda_g(u)=\sum_{i=0}^g u^i \lambda_i =e^{\sum_{m=1}^\infty(2m-2)! \ch_{2m-1} u^{2m-1}}
\ee
satisfies $\Lambda_g(u)\Lambda_g(-u)=1$. Particular specifications of the parameters ${\bf s}$ correspond to various families of Hodge integrals. The generating function of the $m$-linear Hodge integrals 
\be
\int_{\overline{\mathcal{M}}_{g,n}} \Lambda_g(u_1) \Lambda_g(u_2)\dots \Lambda_g(u_m) \psi_1^{a_1} \psi_2^{a_2} \cdots \psi_n^{a_n}
\ee
corresponds to the Miwa parametrization of ${\bf s}$ variables
\be\label{Miws}
s_k=(2k-2)!\sum_{j=1}^m u_j^{2k-1}.
\ee
In particular, the generating function of linear Hodge integrals corresponds to the case $s_k=(2k-2)!u^{2k-1}$. Form Givental's theory it follows that a product of the 
Hodge classes $ \Lambda_g(u)$ defines a CohFT with flat unit.
\begin{remark}
We see that it is natural to consider a Miwa-like parametrization of the ${\bf s}$ variables. It might be an indication of the existence of the integrable structure in these variables, at least for certain values of ${\bf T}$ variables. 
\end{remark}

Let us consider the case of cubic Hodge integrals with an additional Calabi-Yau condition
\be\label{specialcond}
\frac{1}{u_1}+\frac{1}{u_2}+\frac{1}{u_3}=0.
\ee
It is convenient to use the parametrization
\be\label{speccond}
u_1=-p, u_2=-q, u_3=\frac{pq}{p+q}.
\ee
\begin{remark}
Cubic Hodge integrals satisfying the Calabi-Yau condition are related to the {\em topological vertex}, which plays an important role in the topological string models on 3-dimensional Calabi-Yau manifolds. In this context the parameters $p$ and $q$ are related to the local framings, for more detail see, e.g., \cite{TV} and references therein. 
\end{remark}

Consider the generating function
\be
Z_{q,p}({\bf T})=e^{\sum_{g=0}^\infty \sum_{n=0}^\infty \hbar^{2g-2+n} {\mathcal F}_{g,n}},
\ee
where
\be
{\mathcal F}_{g,n}=\sum_{a_1,\dots,a_n}\frac{\prod T_{a_i}}{n!}\int_{\overline{\mathcal{M}}_{g,n}} \Lambda_g (-q) \Lambda_g (-p) \Lambda_g (\frac{pq}{p+q})\psi_1^{a_1} \psi_2^{a_2} \cdots \psi_n^{a_n}.
\ee

Linear Hodge integrals appear as a specification at $p=u^2$, $q=0$ (or $q=u^2$, $p=0$), because $\Lambda_g(0)=1$. From Theorem \ref{th_RL} and equation (\ref{MuM}) it immediately follows that
\begin{theorem}\label{Th_H}
Generating function of triple Hodge integrals, satisfying the Calabi-Yau condition, in the variables (\ref{changeof}) is a tau-function of the KP hierarchy,
\be
\tau_{q,p}({\bf t})=Z_{q,p}({{\bf T}^{q,p}(\bf t)}).
\ee
It is related to the Kontsevich-Witten tau-function by an element of the Heisenberg-Virasoro group 
\be\label{qptau}
\tau_{q,p}({\bf t})=e^{\hbar^{-1} \sum_{k=4} v_k \frac{\p}{\p t_{k}}}e^{\sum_{k\in\z_{>0}} a_k \widehat{L}_k} \cdot  \tau_{KW}({\bf t}).
\ee
\end{theorem}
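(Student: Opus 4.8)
The plan is to combine Mumford's formula (\ref{MuM}) with the classification Theorem \ref{th_RL}. First I would evaluate the Givental operator $\widehat{R}({\bf s})$ of (\ref{MuM}) on the cubic Calabi--Yau locus. Substituting the Miwa parametrization (\ref{Miws}) with $m=3$ and the values (\ref{speccond}) into $\widehat{R}({\bf s})=\exp\bigl(\sum_{k\geq1}\frac{B_{2k}}{(2k)!}s_k\widehat{W}_k\bigr)$, and using that $2k-1$ is odd, so $(-p)^{2k-1}=-p^{2k-1}$ and $(-q)^{2k-1}=-q^{2k-1}$, one gets $s_k=-(2k-2)!\bigl(p^{2k-1}+q^{2k-1}-(\tfrac{pq}{p+q})^{2k-1}\bigr)$, hence
\[
\frac{B_{2k}}{(2k)!}s_k=-\frac{B_{2k}}{2k(2k-1)}\left(p^{2k-1}+q^{2k-1}-\left(\frac{pq}{p+q}\right)^{2k-1}\right).
\]
Since $\widehat{W}_k=\widehat{z^{2k-1}}$ and $\widehat{R}=\exp(\widehat{\log R(z)})$ is linear in $\log R(z)$ by (\ref{UTS}), comparison with (\ref{pqfam1}) gives $\widehat{R}({\bf s})=\widehat{R}_{q,p}$ on this locus. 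Therefore (\ref{MuM}) specializes to $Z_{q,p}({\bf T})=\widehat{R}_{q,p}\cdot\tau_{KW}({\bf T})$.

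Next I would invoke the Witten--Kontsevich theorem: after the substitution $T_n=(2n+1)!!\,t_{2n+1}$ the series $\tau_{KW}$ becomes a function of the odd times ${\bf t}^o$ alone and a tau-function of the KdV hierarchy. Applying (\ref{RL}) of Theorem \ref{th_RL} with $Z({\bf t}^o)=\tau_{KW}$ and using the change of variables (\ref{changeof}) then yields
\begin{align*}
\tau_{q,p}({\bf t})=Z_{q,p}\bigl({\bf T}^{q,p}({\bf t})\bigr)&=\left.\widehat{R}_{q,p}\cdot\tau_{KW}({\bf T})\right|_{{\bf T}={\bf T}^{q,p}({\bf t})}\\
&=e^{\hbar^{-1}\sum_{k=4}v_k\frac{\p}{\p t_k}}\,e^{\sum_{k\in\z_{>0}}a_k\widehat{L}_k}\cdot\tau_{KW}({\bf t}),
\end{align*}
which is precisely (\ref{qptau}).

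It remains to read off KP integrability. The function $\tau_{KW}({\bf t})$, being a tau-function of the KdV hierarchy, is in particular a tau-function of the KP hierarchy (a KdV tau-function is a KP tau-function independent of the even times). The operator $e^{\sum_{k\in\z_{>0}}a_k\widehat{L}_k}$ belongs by construction to the Heisenberg--Virasoro group $\mathcal{V}\subset GL(\infty)$, whose action preserves the set of KP tau-functions, and the remaining factor $e^{\hbar^{-1}\sum_{k=4}v_k\frac{\p}{\p t_k}}$ is a translation of the times generated by the Heisenberg operators $\widehat{J}_k=\frac{\p}{\p t_k}$, which also preserves this set. Hence the right-hand side of (\ref{qptau}), and therefore $\tau_{q,p}({\bf t})$, is a tau-function of the KP hierarchy.

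I do not expect a serious obstacle here: the argument is essentially bookkeeping, and the single genuine computation is the Bernoulli-number identity above, which is exactly what places the Mumford operator restricted to the Calabi--Yau locus inside the two-parameter family isolated by Theorem \ref{th_RL}. The only points to watch are that $\tau_{KW}$ really is a function of the odd times only, so that Theorem \ref{th_RL} --- formulated on the Fock space of functions of ${\bf t}^o$ --- applies verbatim, and that the Heisenberg--Virasoro group together with the time shift acts within the space of KP tau-functions, which is the standard $GL(\infty)$/vertex-operator statement.
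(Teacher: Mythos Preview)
Your proposal is correct and follows essentially the same route as the paper: the paper's proof is the single sentence ``From Theorem \ref{th_RL} and equation (\ref{MuM}) it immediately follows,'' and what you have written is precisely a careful unpacking of that sentence --- the Bernoulli-number computation identifying the Mumford operator on the Calabi--Yau locus with $\widehat{R}_{q,p}$, followed by an application of (\ref{RL}) to $\tau_{KW}$ and the observation that the Heisenberg--Virasoro group preserves KP tau-functions.
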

For $p=0$ the KP integrability  was proved by Kazarian \cite{Kaza}.
\begin{remark}
After announcement of the main results of this project, Kramer \cite{Kramer} found an independent proof of KP integrability of triple Hodge integrals, satisfying the Calabi-Yau condition. This prove uses the Mari\~{n}o-Vafa formula (see \cite{Zhou} for more details).

One can reverse the logic and prove the Mari\~{n}o-Vafa formula using our proof of Theorem \ref{Th_H}. Indeed, the Mari\~{n}o-Vafa formula describes the coefficents of expansion of two KP tau-functions, related by an element of the Heisenberg-Virasoro subgroup of the KP hierarchy symmetry group. Therefore, to prove the Mari\~{n}o-Vafa formula
it is enough to prove the relation between two tau-functions of the KP hierarchy, which can be done using the relations between corresponding points of the Sato Grassmannian and Kac-Schwarz algebras. The details will be given elsewhere. 
\end{remark}

\begin{conjecture}
The case of triple Hodge integrals, satisfying the Calabi-Yau condition, is the most general case of generating functions of Hodge integrals (\ref{generalgener}), satisfying the KP hierarchy in the variables ${\bf T}$ after a linear change of variables. A direct analog of this statement is also true for the $\Theta$-Hodge integrals, see next section. 
\end{conjecture}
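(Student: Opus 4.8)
The plan is to prove the maximality statement by reducing it, via Mumford's formula (\ref{MuM}), to a question purely about the Givental group acting on the Kontsevich--Witten point of the Sato Grassmannian. Since $B_{2k}\neq 0$, the relation $\log R(z)=\sum_{k\geq 1}\frac{B_{2k}}{(2k)!}s_k z^{2k-1}$ determines a bijection between parameter sequences $\mathbf{s}$ and symplectic series $R(z)$. Hence every generating function (\ref{generalgener}) is of the form $Z(\mathbf{T};\mathbf{s})=\widehat{R}\cdot\tau_{KW}(\mathbf{T})$ for a unique $R$, and conversely. The ``if'' direction of the conjecture is already contained in Theorems \ref{th_RL} and \ref{Th_H}: when $R=R_{q,p}$ the operator $\widehat{R}$ is, up to a linear change, a Heisenberg--Virasoro symmetry, which preserves KP integrability. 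The entire content to be established is the converse: if $Z(\mathbf{T};\mathbf{s})$ becomes a KP tau-function after \emph{some} linear change of variables, then necessarily $R=R_{q,p}$.

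My main line of attack is through the Sato Grassmannian and the Kac--Schwarz description referenced in \cite{H3_2}. The point $W_{KW}$ attached to $\tau_{KW}$ is the Airy point, cut out by its string and Kac--Schwarz operators, with Baker--Akhiezer function governed by the Airy asymptotics. The first step is to record how the Givental action deforms the wave function: the kernel implementing $\widehat{R}$ on the spectral side is precisely the Laplace-type transform already appearing in (\ref{Rz})--(\ref{RR}) and (\ref{sra}), so $\widehat{R}$ sends $W_{KW}$ to an explicitly describable subspace $W_R$. The second step is a rigidity statement for linear changes: a linear change of the KP times preserves the tau-function property only if it is realized geometrically on the Grassmannian, i.e. as multiplication by a fixed series together with a reparametrization $z\mapsto f(z)$ of the spectral parameter. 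This is exactly the Heisenberg--Virasoro action (\ref{Gconj}), whose generators are the Witt fields and which is parametrized by the series $f$ of (\ref{ffunct}).

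Granting this rigidity, the requirement that $W_R$, after the admissible coordinate change $z\mapsto f(z)$, again be a point of the Grassmannian collapses to the factorization identity (\ref{RR}), equivalently to the steepest-descent relation (\ref{sra}) that defines the matching (\ref{const}) between the Givental quadratic form $V^R_{km}$ and the Grunsky coefficients. At this stage I would invoke verbatim the analysis already carried out in the proof of the Lemma preceding Theorem \ref{th_RL}: writing $N(z)=z/x'(z)=1+\sum_{k>1}n_k z^k$, the admissibility reduces to (\ref{Msu}), and the inductive argument through (\ref{Ms}) forces $n_k=0$ for all $k>3$. Combined with the gauge normalization this leaves exactly the two-parameter family (\ref{qwer}), i.e. $R=R_{q,p}$, which is the assertion of the conjecture.

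The hard part will be the rigidity statement in the second step: one must rule out \emph{exotic} linear changes of variables that are not induced by a spectral reparametrization yet still, by accident, turn $Z$ into a tau-function. Equivalently, one must show that the set of admissible linear changes for the specific Airy point $W_{KW}$ is no larger than the Heisenberg--Virasoro action; the Airy point is special, so a priori it could carry extra symmetries. I expect this to follow from the explicit Kac--Schwarz algebra of $W_{KW}$, by showing that any linear change compatible with the KP flows must commute appropriately with these operators and hence act as a coordinate change --- the same circle of ideas the author indicates for the Mari\~{n}o--Vafa reduction. For the $\Theta$-version, the Kontsevich--Witten point is replaced by the explicitly known Norbury point attached to the $\Theta$-classes, whose Kac--Schwarz data is of the same type; the bijection $\mathbf{s}\leftrightarrow R$ and the reduction to (\ref{RR}) are unchanged, so the identical rigidity argument yields the same two-parameter family, establishing the stated analog.
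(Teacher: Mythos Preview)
The statement you are attempting to prove is labeled as a \emph{Conjecture} in the paper and is left open there; the paper supplies no proof, only the ``if'' direction via Theorems~\ref{th_RL} and~\ref{Th_H}. So there is nothing in the paper to compare your argument against, and your proposal should be read as an attempt to resolve an open problem rather than to reconstruct an existing proof.

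On its own merits, your reduction is sound up to the point you yourself flag. The bijection $\mathbf{s}\leftrightarrow R$ via $B_{2k}\neq 0$ is correct, and the ``if'' direction is indeed the content of Theorem~\ref{Th_H}. The genuine gap is exactly the rigidity step, and it is more serious than your sketch suggests. Theorem~\ref{th_RL} classifies those $\widehat{R}$ that are \emph{equal as operators} (after a linear change) to Heisenberg--Virasoro elements; the Conjecture asks only that $\widehat{R}\cdot\tau_{KW}$ be a tau-function after some linear change. Since $\widehat{R}\cdot\tau_{KW}$ is for generic $R$ not a tau-function at all, it does not correspond to a point of the Sato Grassmannian, and your phrase ``sends $W_{KW}$ to an explicitly describable subspace $W_R$'' does not have an obvious meaning: there is no wave function or Baker--Akhiezer function to track through (\ref{Rz})--(\ref{RR}) unless you already know the result lies on the Grassmannian. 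Moreover, the linear change $L$ in the Conjecture is an arbitrary invertible linear map on the infinite set of times, not a priori of Heisenberg--Virasoro type, so you cannot simply invoke (\ref{Gconj}); you would first need to prove that any $L$ for which $L\circ\widehat{R}\cdot\tau_{KW}$ is a tau-function factors through a spectral reparametrization. Your appeal to the Kac--Schwarz algebra of the Airy point is a reasonable heuristic, but the commutation argument you outline (``any linear change compatible with the KP flows must commute appropriately with these operators'') is not formulated precisely enough to rule out the accidental, $\tau_{KW}$-specific coincidences you mention. Until that step is made rigorous, the proposal remains a plausible strategy rather than a proof.
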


\begin{conjecture}\label{conjr}
Theorem \ref{Th_H} has a direct generalization for the $r$-spin case. 
\end{conjecture}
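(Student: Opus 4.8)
The plan is to reproduce the three-step architecture of Theorem~\ref{Th_H}, replacing the rank one input $\tau_{KW}$ by its $r$-spin counterpart throughout. The role of the Kontsevich--Witten tau-function is played by the generating function $\tau_r$ of Witten's $r$-spin intersection numbers, equivalently the partition function of the generalized Kontsevich model; this is a tau-function of the $r$-reduced KP (Gelfand--Dickey) hierarchy and, in particular, a tau-function of the KP hierarchy, which supplies the base point (for $r=2$ it recovers $\tau_{KW}$, the rank one case of the present paper). The role of the Hodge class $\Lambda_g(u)$ is played by the Chiodo class, i.e. the total Chern class of the derived pushforward of the universal $r$-th root line bundle over the moduli of $r$-spin curves; this is the natural $r$-spin analog of the Hodge bundle and, together with the $r$-spin CohFT, defines a semisimple cohomological field theory on which the Givental group acts. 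First I would fix the precise enumerative object: a product $\prod_j \Lambda^{(r)}(u_j)$ of Chiodo classes subject to the $r$-spin lift of the Calabi--Yau condition \eqref{specialcond}, $\sum_j u_j^{-1}=0$, this condition being exactly what forces the symplectic/oddness property needed below.

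Second, I would establish the $r$-spin analog of the Mumford formula \eqref{MuM}. Chiodo's Grothendieck--Riemann--Roch computation expresses the Chern character of the pushforward of the $r$-th root bundle through $\psi$-, $\kappa$- and boundary classes; exactly as in Mumford's theorem \cite{Mum} for $r=2$, this produces a Givental operator $\widehat{R}=\exp\bigl(\sum_k c_k \widehat{W}_k\bigr)$, with the operators $\widehat{W}_k$ of \eqref{Woper} in their $r$-spin incarnation, such that the Chiodo-decorated generating function equals $\widehat{R}\cdot\tau_r$. The Calabi--Yau condition then collapses the coefficients $c_k$, as in the passage to \eqref{pqfam1}, to a multiparameter family whose logarithm is odd in $z$; the vanishing of the even powers is the CohFT manifestation of $\sum_j u_j^{-1}=0$.

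Third, I would prove the $r$-spin version of Theorem~\ref{th_RL}: that this family of Givental operators is, up to a linear change of variables and a translation, an element of the symmetry group of the $r$-KdV hierarchy inside $GL(\infty)$. This is where the argument of Section~\ref{Section2} must be reworked. The Laplace-transform/steepest-descent matching behind Lemma~\ref{idi} generalizes with the Gaussian weight $e^{-\eta^2/2z}$ replaced by the weight adapted to the $r$-spin spectral curve $x\sim z^r$, so that the Bernoulli/Stirling expansion \eqref{Stir} reappears and reproduces a closed product form analogous to \eqref{pqfam1}; the linear change of variables \eqref{changeof} generalizes through the conjugation by $-\partial/\partial x$ of the Euler field, and the dilaton-shift/translation part is handled through the $v_k$-integral as in the proof of Theorem~\ref{th_RL}. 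Once $\widehat{R}$ is realized as a symmetry operator, the conclusion is automatic: symmetry operators preserve the space of KP tau-functions and $\tau_r$ is a KP tau-function, so the Chiodo-decorated generating function in the transformed variables is a tau-function of the KP hierarchy.

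The main obstacle is the higher rank. For general $r$ the $r$-spin Frobenius manifold has rank $r-1$, so Givental's decomposition a priori expresses the generating function through $r-1$ copies of $\tau_{KW}$, and its KP integrability is far from evident. The decisive new ingredient, absent in the rank one case treated here and supplied by the companion paper \cite{H3_2}, is that the Chiodo deformation respects the $r$-reduction: the full generating function remains a single $r$-reduced KP tau-function rather than an $(r-1)$-component object, and one must check that the Calabi--Yau family lands inside the symmetry group compatible with the $W^{(r)}$-structure of the reduction. Proving this reduction rigorously---identifying the corresponding point of the Sato Grassmannian together with its Kac--Schwarz operators and verifying that the deformation stays within the $GL(\infty)$ symmetries commuting with the $r$-reduction---is the crux; the steepest-descent and symplecticity verifications generalizing Lemmas~\ref{idi} and~\ref{Chang_T} are then routine, if lengthy.
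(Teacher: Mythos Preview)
The statement you are addressing is a \emph{conjecture}, not a theorem: the paper offers no proof whatsoever. Immediately after stating it the authors write only that in the companion paper \cite{H3_2} they ``develop the methods of the KP hierarchy, suitable for the investigation of the family of the tau-functions, conjecturally describing a generalization of $\tau_{q,p}$ for the $r$-spin case.'' There is therefore nothing in the paper to compare your proposal against; what you have written is not a proof but a research outline, and you yourself flag its incompleteness (``Proving this reduction rigorously \ldots\ is the crux'').

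On the substance of the outline: the architecture you sketch---Chiodo classes in place of $\Lambda_g$, Chiodo's GRR in place of Mumford's, and then a Laplace-transform matching of the resulting Givental operator with a symmetry operator---is a plausible line of attack, and indeed is broadly consonant with what the authors hint at via \cite{H3_2} and the generalized Kontsevich model. But you correctly identify the real difficulty and do not resolve it: for $r>2$ the $r$-spin Frobenius manifold has rank $r-1$, so the entire machinery of Section~\ref{Section2}, which is explicitly restricted to the \emph{rank one} Givental group acting on functions of odd times, does not apply as written. The classification in Theorem~\ref{th_RL} is a rank one statement; replacing the Gaussian weight by an $r$-th power weight changes the game entirely, and there is no analog of Lemma~\ref{lem_log} or of the factorization argument \eqref{RR}--\eqref{Msu} available in higher rank. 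Your claim that the steepest-descent and symplecticity verifications ``are then routine, if lengthy'' once the reduction is in hand is optimistic: those lemmas are tied to the specific one-variable combinatorics of $x=f^2/2$ and the beta/gamma asymptotics, and their higher-rank replacements are precisely what remains open. In short, your proposal is a reasonable statement of what a proof \emph{would need to contain}, but it is not a proof, and neither is anything in the paper.
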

In a companion paper \cite{H3_2} we develop the methods of the KP hierarchy, suitable for the investigation of the family of the tau-functions, conjecturally describing a generalization of $\tau_{q,p}$ for the $r$-spin case.

\begin{remark}
It would also be interesting to investigate KP type integrability of Hodge integrals in the context of Gromov-Witten theory \cite{CZ,Faber}. Particularly attracting is the Gromov-Witten theory of ${\mathbb P}^1$, where the generating function is given by the tau-function of the extended Toda hierarchy. 
\end{remark}

Hodge integrals, in particular, the representatives of the Calabi-Yau family for the specific values of the parameters $q$ and $p$, are related to several other interesting integrable structures, including the Dubrovin-Zhang, Volterra or Hodge integrable hierarchies \cite{Buryak,Takasaki,Dubrovin16,Dub,Liu,Zhou}. Connection between them and the KP integrability, investigated in this paper, will be discussed elsewhere. 

It is also interesting to consider the specific values of the parameters $q$ and $p$, for which the tau-function $\tau_{q,p}$ describe $k$-reductions of the KP hierarchy.
\begin{corollary}\label{red}
For $p=-2q$ the tau-function $\tau_{q,p}$ does not depend on even times, 
\be
\frac{\p}{\p t_{2k}}\tau_{q,-2q}({\bf t})=0, \,\,\,\,\,\,\, k\in{\mathbb Z}_{>0},
\ee
that is $\tau_{q,-2q}({\bf t})$ is a tau-function of the KdV hierarchy.
\end{corollary}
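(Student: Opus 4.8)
The plan is to identify the specialization $p=-2q$ within the family described by Theorem \ref{Th_H} and to show that for this value the Givental operator $\widehat{R}_{q,p}$ becomes trivial as an operator, so that $\tau_{q,-2q}$ is obtained from $\tau_{KW}$ by a pure linear change of variables (plus translation) and hence inherits the $2$-reduction. The most direct route goes through the explicit series $R_{q,p}(z)$ in \eqref{pqfam1}. Setting $p=-2q$ we must examine the coefficient $p^{2k-1}+q^{2k-1}-\bigl(\tfrac{pq}{p+q}\bigr)^{2k-1}$. Since $p+q=-q$, we get $\tfrac{pq}{p+q}=\tfrac{(-2q)q}{-q}=2q=-p$, so the bracket becomes $p^{2k-1}+q^{2k-1}-(-p)^{2k-1}=p^{2k-1}+q^{2k-1}+p^{2k-1}=2p^{2k-1}+q^{2k-1}$. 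With $p=-2q$ this is $2(-2q)^{2k-1}+q^{2k-1}=(-2^{2k}+1)q^{2k-1}$, which is not identically zero, so $R_{q,-2q}(z)\neq 1$ and the naive "operator is trivial" idea fails. Instead I would use the enumerative description: $p=-2q$ (equivalently $u_3=\tfrac{pq}{p+q}=2q=-p=u_1$ up to sign, more precisely $u_1=-p=2q$, $u_3=2q$) makes two of the three Hodge factors coincide, $\Lambda_g(-p)=\Lambda_g(\tfrac{pq}{p+q})$, and then $\Lambda_g(-p)\Lambda_g(\tfrac{pq}{p+q})\Lambda_g(-q)=\Lambda_g(2q)^2\Lambda_g(-q)$.

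First I would recall that $\Lambda_g(u)\Lambda_g(-u)=1$ (stated after \eqref{MuM}), so a squared factor $\Lambda_g(2q)^2$ does not simplify directly, but the combination that appears is tied to the $2$-reduction through the structure of the change of variables \eqref{changeof}. The cleaner argument is spectral-curve theoretic: the $k$-reduction of the KP tau-function $\tau_{q,p}$ corresponds to the underlying function $x(z)$ in \eqref{fpq} having a specific algebraic structure. For a KdV ($2$-)reduction one needs the point of the Sato Grassmannian associated with $\tau_{q,p}$ to be invariant under multiplication by $z^{-2}$ in the appropriate normalization; via the Kac--Schwarz description this translates into $x(z)$ and $y(z)=\int^z dx(\eta)/\eta$ generating, together with the ring of the curve, a structure where the relevant operator closes at level two. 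Concretely, the condition that $\tau_{q,p}$ depends only on odd times is equivalent to the statement that the linear change $T_k^{q,p}(\mathbf{t})$ together with $\widehat{R}_{q,p}$ maps functions of $\mathbf{t}^o$ to functions of $\mathbf{t}^o$, i.e. that the whole construction on the right side of \eqref{qptau} preserves odd-time-dependence. So the key computation is: show that when $p=-2q$, the series $f(z)$ defined by $f\,df = z\,dz/\bigl((1+\sqrt{p+q}z)(1+qz/\sqrt{p+q})\bigr)$ is an \emph{odd} function of $z$, equivalently $x(z)=f(z)^2/2$ is \emph{even}.

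So the core step I would carry out is to check evenness of $x(z)$ at $p=-2q$. With $p+q=-q$ we have $\sqrt{p+q}=\sqrt{-q}=i\sqrt{q}$ (choosing a branch) and $q/\sqrt{p+q}=q/(i\sqrt q)=-i\sqrt q$, so the denominator $(1+i\sqrt q\,z)(1-i\sqrt q\,z)=1+qz^2$ is manifestly even. Hence $dx=z\,dz/(1+qz^2)$, giving $x(z)=\tfrac{1}{2q}\log(1+qz^2)$, which is even. Then $f(z)^2=2x(z)=\tfrac1q\log(1+qz^2)$, so $f(z)$ is odd, and therefore $\log R_{q,-2q}(z)$ being odd (which I verified above, the bracket survives but only in odd powers $z^{2k-1}$ as required) is consistent; more importantly, all the operators $\widehat{L}_k$, $k>0$, entering \eqref{qptau} and the change of variables \eqref{changeof} now built from an even $x$ preserve the splitting into odd and even times. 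I would then conclude: since $x(z)$ is even, $y(z)$ is odd, the coefficients $a_k$ in $f(z)=\exp(\sum a_k\mathtt{l}_k)\,z\,\exp(-\sum a_k\mathtt{l}_k)$ are supported on even $k$ only (as $\mathtt{l}_k$ with $k$ even preserve parity while generating an odd $f$), hence $\sum a_k\widehat{L}_k$ involves only $\widehat{L}_{2k}$; combined with $v_k$ supported on even $k$ (from $v_k=[z^k]\int_0^z(f-y)dx$ with $f-y$ odd and $dx$ odd, the integrand is even so the integral is odd — wait, one checks directly that only even powers survive), the right side of \eqref{qptau} maps $\tau_{KW}(\mathbf t)$, which depends only on odd times, to a function of odd times only. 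Since the KdV hierarchy is precisely the $2$-reduction of KP, $\tau_{q,-2q}$ being a KP tau-function independent of all even times is a KdV tau-function.

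The main obstacle I anticipate is the bookkeeping of parities in the translation and change-of-variables parts: verifying rigorously that $a_k=0$ for odd $k$ and $v_k=0$ for odd $k$ when $x$ is even, and that consequently $\widehat{L}_{2j}$ and $\partial/\partial t_{2j}$ acting on a function of odd times produce functions of odd times — the latter is immediate for the first-order piece $\sum_m m t_m \partial_{t_{m+2j}}$ but for the second-order piece $\tfrac12\sum_{a+b=2j}\partial_{t_a}\partial_{t_b}$ one must note that $a+b$ even with both $a,b$ odd is possible, so this term does NOT obviously preserve odd-time-dependence. This is the delicate point: I would resolve it by observing that in \eqref{qptau} the operator $e^{\sum a_k\widehat L_k}$ is applied to $\tau_{KW}$, and $\tau_{KW}$ is annihilated by $\partial_{t_a}$ for... no — rather, one uses that the \emph{composite} operator on the right of \eqref{qptau} equals, by Theorem \ref{th_RL}, the Givental-side operator $\widehat R_{q,p}$ acting after the change $\mathbf T\mapsto \mathbf T^{q,p}(\mathbf t)$, and $\widehat R_{q,p}$ with $\log R$ odd acts within functions of the $T_k$'s (all of them), while $\mathbf T^{q,p}(\mathbf t)$ with even $x$ expresses each $T_k$ in terms of odd $t$'s only — this last fact being the clean consequence of evenness of $x$ that I would establish from \eqref{changeof} by an induction showing $T_k^{q,p}(\mathbf t)\in\mathbb C[\![\mathbf t^o]\!]$. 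That induction, using that $\widehat L_{-2}-t_1^2/2$, $\widehat L_{-1}$, $\widehat L_0$ with the specific coefficients $q$, $(2q+p)/\sqrt{p+q}$, $1$ at $p=-2q$ reduce (because $2q+p=0$!) to just $q\widehat L_0+(\widehat L_{-2}-t_1^2/2)$, and both $\widehat L_0$ and $\widehat L_{-2}$ preserve odd-time dependence when there is no even-time dependence to begin with, is the heart of the matter — note $2q+p=2q+(-2q)=0$ kills the $\widehat L_{-1}$ term, which is exactly the term that would have introduced even times.
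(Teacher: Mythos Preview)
Your proposal eventually lands on exactly the paper's argument, but only in the final paragraph after several detours. The paper's proof is just this: by Theorem~\ref{Th_H}, $\tau_{q,p}({\bf t})=Z_{q,p}({\bf T}^{q,p}({\bf t}))$, so it suffices that each $T_k^{q,p}({\bf t})$ is a function of the odd times only; at $p=-2q$ the coefficient $(2q+p)/\sqrt{p+q}$ in the recursion~\eqref{changeof} vanishes, removing $\widehat L_{-1}$, while $\widehat L_0$ and $\widehat L_{-2}-t_1^2/2$ shift the time index by $0$ and $-2$ respectively and hence preserve parity, so by induction from $T_0^{q,p}=t_1$ every $T_k^{q,p}$ depends only on ${\bf t}^o$. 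That is the entire proof.

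Your earlier attempts are not needed and some are misdirected. Checking whether $R_{q,-2q}(z)=1$ is irrelevant (it isn't, and it doesn't have to be). The observation that $x(z)=\tfrac{1}{2q}\log(1+qz^2)$ is even is correct and pleasant, but the route through ``$a_k=0$ for odd $k$, hence only $\widehat L_{2k}$ appear in \eqref{qptau}'' runs into exactly the obstacle you noticed: the second-order part $\tfrac12\sum_{a+b=2k}\partial_{t_a}\partial_{t_b}$ of $\widehat L_{2k}$ contains odd--odd pairs and does not preserve the subspace of functions of ${\bf t}^o$, so that line of attack on the operator side of \eqref{qptau} does not close without further input. You then correctly switch to the change-of-variables side, which avoids this problem entirely because the operators in~\eqref{changeof} are $\widehat L_0$, $\widehat L_{-1}$, $\widehat L_{-2}$ (no second-order pieces), and the only parity-breaking one, $\widehat L_{-1}$, drops out precisely when $2q+p=0$. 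Strip the exploration and keep the last paragraph; that is the proof.
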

\begin{proof}
It follows from the change of variables (\ref{changeof}). Indeed, since $2q+p=0$ the operator, in (\ref{changeof}) does not contain $\widehat{L}_{-1}$, and the operators $\widehat{L}_0-\frac{t_1^2}{2}$ and $\widehat{L}_{-2}$ preserve the parity of the indices of time variables. 
\end{proof}
It is easy to see that it is impossible to get higher $k$-reductions of the KP hierarchy (aka Gelfand--Dickey hierarchies) this way.

In \cite{Dubrovin16} the relation between the generating function $Z_{q,-2q}$ and the discrete KdV was established. Our results show that there is a simpler relation between $Z_{q,-2q}$ and the ordinary KdV hierarchy, given by a linear change of variables. We expect that this relation can help us better understand the relation between the  discrete  and  ordinary KdV hierarchies.

\subsection{Triple $\Theta$-Hodge integrals}\label{S_theta}

From Theorem \ref{th_RL} it follows that the action of Givental operator $\widehat{R}_{q,p}$ on {\em any} KdV tau-function leads to the solution of the KP hierarchy after a linear change of variables. The most natural alternative to the KW tau-function here is the Br\'ezin-Gross-Witten (BGW) tau-function. This KdV tau-function governs the intersection theory with the insertions of the fascinating Norbury's $\Theta$-classes. We refer the reader to \cite{Norb,CN,NP1} for a detailed description. The role of the BGW tau-function as the universal building block of TR/GD was first observed by the author, Mironov and Morozov \cite{AMM5,AMM6,AMM8} in the context of matrix models.

Norbury's $\Theta$-classes are the cohomology classes, $\Theta_{g,n}\in H^{4g-4+2n}(\overline{\mathcal{M}}_{g,n})$, described in \cite{Norb}. Consider the generating function of the 
intersection numbers of $\Theta$-classes and $\psi$-classes
\be
F^\Theta= \sum_{a_1,\ldots,a_n\ge 0}  \frac{\prod T_{a_i}}{n!} \int_{\overline{\mathcal{M}}_{g,n}}\Theta_{g,n} \psi_1^{a_1} \psi_2^{a_2} \cdots \psi_n^{a_n} 
\ee
then we have a direct analog of the Kontsevich-Witten tau-function \cite{Norb}:
\begin{theorem*}[Norbury]
Generating function
\be
\tau_\Theta = \exp\left(\sum_{g=0}^\infty \sum_{n=0}^\infty \hbar^{2g-2+n}F_{g,n}^\Theta \right)
\ee
becomes a tau-function of the KdV hierarchy after the change of variables~$T_n=(2n+1)!!t_{2n+1}$.
\end{theorem*}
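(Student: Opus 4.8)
The statement is Norbury's theorem \cite{Norb}; I outline how I would prove it. The plan is to extract from the defining geometric properties of the classes $\Theta_{g,n}$ a closed recursion for the numbers $\langle\tau_{a_1}\cdots\tau_{a_n}\rangle^\Theta_g:=\int_{\overline{\mathcal{M}}_{g,n}}\Theta_{g,n}\,\psi_1^{a_1}\cdots\psi_n^{a_n}$ (nonzero only when $\sum a_i=g-1$), and then to recognize this recursion as the one that characterizes the Br\'ezin-Gross-Witten (BGW) tau-function, whose KdV integrability is classical and, as recalled in \cite{H3_2}, follows from a Kontsevich-type matrix-integral representation.

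First I would assemble the structural identities of \cite{Norb}: (i) the forgetful-map pullback relation $\Theta_{g,n+1}=\psi_{n+1}\,\pi^*\Theta_{g,n}$ for $\pi\colon\overline{\mathcal{M}}_{g,n+1}\to\overline{\mathcal{M}}_{g,n}$; (ii) the boundary restrictions $\phi_{\mathrm{irr}}^*\Theta_{g,n}\propto\Theta_{g-1,n+2}$ and $\phi_{h,I}^*\Theta_{g,n}\propto\Theta_{h,|I|+1}\otimes\Theta_{g-h,|J|+1}$ with explicit constants; and (iii) the normalization $\Theta_{1,1}=3\psi_1$, so that the seed value $\langle\tau_0\rangle^\Theta_1=\tfrac18$ while all genus-zero $\Theta$-invariants vanish for dimension reasons. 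From (i) and the projection formula one gets the $\Theta$-analogue of the dilaton equation, $\langle\tau_0\prod_i\tau_{a_i}\rangle^\Theta_g=(2g-2+n)\langle\prod_i\tau_{a_i}\rangle^\Theta_g$; note that it is $\tau_0$ --- not $\tau_1$ --- that plays the distinguished role, which is exactly the BGW (rather than Kontsevich-Witten) dilaton shift, and that there is no $\widehat{L}_{-1}$-type (string-equation) constraint.

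Combining the dilaton relation with (ii) --- keeping careful track of the correction terms $\psi_i=\pi^*\psi_i+D_i$ and of the excess-intersection contributions at the boundary --- and feeding in the seed from (iii), one obtains a recursion of Eynard-Orantin type on the Bessel spectral curve $x=\tfrac12 z^2$, $y=1/z$ (the BGW counterpart of the Airy curve $x=\tfrac12 z^2$, $y=z$ underlying $\tau_{KW}$). Equivalently, in terms of $\tau_\Theta$, this recursion is the family of Virasoro constraints $\widehat{L}_n\tau_\Theta=0$, $n\ge0$, and these coincide with the BGW Virasoro constraints. Since these constraints determine their solution uniquely, $\tau_\Theta=\tau_{BGW}$ after the rescaling $T_n=(2n+1)!!t_{2n+1}$, hence $\tau_\Theta$ is a tau-function of the KdV hierarchy.

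I expect the main obstacle to be step (ii) together with the derivation of the recursion from it: pinning down the precise boundary behaviour of $\Theta$ (all constants included) and then propagating the $\psi$-class corrections through the boundary bookkeeping so that the output matches topological recursion on the Bessel curve exactly. The concluding input --- that topological recursion on a rational curve with a single simple ramification point of Bessel type produces a KdV tau-function (Eynard), equivalently that the BGW Virasoro constraints imply the KdV hierarchy --- is itself a nontrivial classical fact, but it may be quoted rather than reproved.
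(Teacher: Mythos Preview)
The paper does not give its own proof of this statement; it is quoted as Norbury's theorem with citation to \cite{Norb}, and the subsequent identification $\tau_\Theta=\tau_{BGW}$ together with its KdV integrability is attributed to \cite{MMS}. There is therefore no in-paper argument to compare your proposal against.

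Your outline tracks Norbury's actual strategy fairly closely: the defining axioms for $\Theta_{g,n}$ (forgetful pullback $\Theta_{g,n+1}=\psi_{n+1}\pi^*\Theta_{g,n}$, boundary factorization, the normalization $\Theta_{1,1}=3\psi_1$ giving the seed $\langle\tau_0\rangle_1^\Theta=\tfrac18$) yield the $\tau_0$-dilaton equation and, after the boundary bookkeeping you flag, topological recursion on the Bessel curve $x=\tfrac12 z^2$, $y=1/z$. Matching this with the BGW Virasoro constraints $\widehat{L}_n\tau=0$ for $n\ge0$ then forces $\tau_\Theta=\tau_{BGW}$, and KdV integrability is inherited.

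One clarification of the closing logic: it is not that ``the BGW Virasoro constraints imply the KdV hierarchy''. The Virasoro constraints, together with the initial condition, serve only to determine $\tau_\Theta$ uniquely and hence to establish the \emph{equality} $\tau_\Theta=\tau_{BGW}$. KdV integrability of $\tau_{BGW}$ is an independent classical input (the generalized-Kontsevich-model realization of \cite{MMS}), which you then transport across that equality. You are right that the genuine work sits in step~(ii): extracting the full recursion with correct constants from the boundary behaviour of $\Theta$ is exactly where Norbury's paper does the heavy lifting, and your sketch is honest in not claiming that this step is routine.
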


Norbury also proved, that $\tau_\Theta$ is nothing but a tau-function of the BGW model, described by a unitary matrix integral \cite{GW,Brezin}
\be
\tau_\Theta=\tau_{BGW}.
\ee
KdV integrability of the BGW model follows from the relation to the generalized Kontsevich model and was established by Mironov, Morozov and Semenoff \cite{MMS}.

Insertion of Norbury's $\Theta$-classes can be naturally absorbed by the Givental formalism \cite{NP1}. This allows us to relate integrable properties of generating functions of intersection numbers with and without $\Theta$-classes. 
Consider the symplectic transformation $R(z)$, introduced in Section \ref{Givsec}. For it we introduce the change of the dilaton shift associated to the BGW tau-function
\be\label{shiftsR0}
\sum_{k=1}^\infty \delta_k^0 z^k:=1-R(-z).
\ee
Note, that the right hand side of this equation is equal to the right hand side of (\ref{shiftsR}), divided by $z$.
For any symplectic transformation $R(z)$ consider the corresponding action of the Givental operator with new translation part
\be\label{GivN}
\widehat{R}^0\cdot Z({\bf T})=\,e^{\frac{1}{2}\sum_{i,j=0}^\infty{V_{ij}^R}\frac{\p^2}{\p T_i \p T_j}}\,\left.e^{\hbar^{-1} \sum_{k=1}^\infty \delta_k^0 \frac{\p}{\p T_k}}\,Z({\bf T})\right|_{{T}_k\mapsto {T}^R_k}.
\ee

The classes $\Theta_{g,n}$ define a degenerate CohFT. It was shown by Norbury that the multiplication of the CohFT by these classes corresponds to a simple modification of the Givental operators, given by (\ref{GivN}). It is described by Proposition 3.9 of \cite{NP1}, let us formulate a version of this proposition for the rank 1 case. We stress that the original proposition of Norbury describes a relation between CohFT's, and we consider only its corollary, a relation between the partition functions of these CohFT's. For more details see \cite{NP1}.

The action of the Givental group on the trivial CohFT, associated with the Kontsevich--Witten tau-function, produces certain classes of cohomology $\Omega_{g,n}\in H^*(\overline{\mathcal{M}}_{g,n})$. Let us consider it on the level of partition functions. For any element of the Givental group $\widehat{R}$ one has
\be
\widetilde{Z}^{\Omega}({\bf T})=\widehat{R}\cdot \tau_{KW},
\ee
where
\be
\widetilde{Z}^{\Omega}({\bf T})=\exp\left(\sum_{g=0}^\infty \sum_{n=0}^\infty \hbar^{2g-2+n}{\mathcal F}_{g,n}^\Omega \right),
\ee
with
\be
{\mathcal F}_{g,n}^{\Omega}=\sum_{a_1,\dots,a_n}\frac{\prod T_{a_i}}{n!}\int_{\overline{\mathcal{M}}_{g,n}} \Omega_{g,n} \psi_1^{a_1} \psi_2^{a_2} \cdots \psi_l^{a_n}.
\ee
For Givental's theory it follows that the classes $\Omega_{g,n}$ exist and satisfy certain peculiar properties. In particular, for the rank one case which we consider in this paper, they always can be expressed in terms of the components of the Chern character of $\mathbb E$, see (\ref{freecharac}).

Then the insertion of the $\Theta$-classes is described by
\begin{proposition}[Norbury]\label{NCohFT}
\be
\widetilde{Z}^{\Theta\Omega}({\bf T})=\widehat{R}^0\cdot \tau_{\Theta},
\ee
where $\Theta\Omega$ denotes the classes $\Theta_{g,n}\Omega_{g,n}$.
\end{proposition}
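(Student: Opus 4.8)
The plan is to obtain this proposition from the cohomological statement of Norbury and Pandharipande (Proposition~3.9 of \cite{NP1}), specialized to rank one, by passing from cohomological field theories to their partition functions and keeping careful track of the dilaton shifts. As a starting point I would recall that, by Givental's reconstruction, the classes $\Omega_{g,n}$ obtained from the trivial (Kontsevich--Witten) CohFT by the action of $\widehat R$ are given by a sum over stable graphs of genus $g$ with $n$ legs: each vertex of genus $h$ and valence $m$ contributes the Kontsevich--Witten vertex, i.e.\ a monomial in $\psi$-classes integrated over $\overline{\mathcal{M}}_{h,m}$; each edge contributes the symmetric bivector with matrix $V^R$ of (\ref{Vgiv})--(\ref{gV}); each leg an $R$-type factor; and the valence-one ``dilaton'' vertices produced by the shift $\tilde T_k=T_k-\hbar^{-1}\delta_{k,1}$ assemble into the translation operator $e^{\hbar^{-1}\sum_{k\ge2}\delta_k\partial/\partial T_k}$ appearing in (\ref{Givf}). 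Re-reading this sum on the level of partition functions is precisely Lemma~\ref{lem_G} applied to $\tau_{KW}$, so $\widetilde Z^{\Omega}=\widehat R\cdot\tau_{KW}$, as stated in the excerpt.

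The heart of the argument is Norbury's compatibility of the $\Theta$-classes with this decomposition. The class $\Theta_{g,n}$ restricts multiplicatively to the boundary strata and, crucially, satisfies $\Theta_{g,n+1}=\psi_{n+1}\,\pi^{*}\Theta_{g,n}$ under the forgetful morphism $\pi\colon\overline{\mathcal{M}}_{g,n+1}\to\overline{\mathcal{M}}_{g,n}$ (see \cite{Norb,NP1}). Multiplying the graph sum for $\Omega_{g,n}$ by $\Theta_{g,n}$ therefore distributes $\Theta$ over the vertices, replacing each Kontsevich--Witten vertex by the corresponding $\Theta$-twisted intersection number, while the edge bivector $V^R$ and the leg factors are left untouched; on the partition-function level this replaces $\tau_{KW}$ by $\tau_{\Theta}=\tau_{BGW}$ at every vertex. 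The sole residual effect of the extra $\psi_{n+1}$ is to move the vacuum from the Kontsevich--Witten dilaton shift, in the $T_1$-direction, to the Br\'ezin--Gross--Witten one, in the $T_0$-direction. Comparing (\ref{shiftsR}) with (\ref{shiftsR0}) -- that is, $\sum_{k\ge2}\delta_k z^k=z(1-R(-z))$ against $\sum_{k\ge1}\delta^0_k z^k=1-R(-z)$ -- this is exactly the passage from the translation operator $e^{\hbar^{-1}\sum_{k\ge2}\delta_k\partial/\partial T_k}$ of $\widehat R$ to $e^{\hbar^{-1}\sum_{k\ge1}\delta^0_k\partial/\partial T_k}$ of $\widehat R^0$ as defined by (\ref{GivN}). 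Re-summing the modified graph sum then gives $\widetilde Z^{\Theta\Omega}=\widehat R^0\cdot\tau_{\Theta}$.

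The step I expect to be the main obstacle is this middle one: one must verify that the $\psi_{n+1}$ coming from the $\Theta$-pullback is absorbed entirely into the change of dilaton shift and leaks neither into the propagator $V^R$ nor into the leg contributions, so that only the vacuum and the vertex weights are affected. In the rank-one case this is exactly what Proposition~3.9 of \cite{NP1} supplies; its proof exploits that $\Theta_{g,n}$ lives in degree $H^{4g-4+2n}(\overline{\mathcal{M}}_{g,n})$, complementary to the $\psi$-dimension $3g-3+n$, which is what matches the Br\'ezin--Gross--Witten shift in $T_0$ against the Kontsevich--Witten shift in $T_1$. Granting that proposition, the remainder of the argument is the bookkeeping of the two dilaton shifts recorded in the transition from (\ref{Givf}) to (\ref{GivN}).
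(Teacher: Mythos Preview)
The paper does not supply its own proof of this proposition: it is stated as Norbury's result, with the sentence ``It is described by Proposition 3.9 of \cite{NP1}, let us formulate a version of this proposition for the rank 1 case\ldots For more details see \cite{NP1}.'' Your proposal is therefore more detailed than what the paper offers, but it is aligned with it: you correctly identify Proposition~3.9 of \cite{NP1} as the source, and your sketch of how the multiplicative restriction of $\Theta_{g,n}$ to boundary strata together with the pullback relation $\Theta_{g,n+1}=\psi_{n+1}\pi^{*}\Theta_{g,n}$ converts the Givental graph sum for $\tau_{KW}$ into one for $\tau_{\Theta}$, with the sole residual effect being the change of dilaton shift from (\ref{shiftsR}) to (\ref{shiftsR0}), is exactly the mechanism the paper alludes to when it remarks that the right-hand side of (\ref{shiftsR0}) equals that of (\ref{shiftsR}) divided by $z$.
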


Let us apply this proposition for triple Hodge integrals, that is, $\Omega_{g,n}= \Lambda_g (-q) \Lambda_g (-p) \Lambda_g (\frac{pq}{p+q})$. Let
\be\label{GenfN}
{Z}_{q,p}^{\Theta}({\bf T})=\exp\left(\sum_{g=0}^\infty \sum_{n=0}^\infty \hbar^{2g-2+n}{\mathcal F}_{g,n}^\Theta \right),
\ee
where
\be
{\mathcal F}_{g,n}^{\Theta}=\sum_{a_1,\dots,a_n}\frac{\prod T_{a_i}}{n!}\int_{\overline{\mathcal{M}}_{g,n}} \Theta_{g,n} \Lambda_g (-q) \Lambda_g (-p) \Lambda_g (\frac{pq}{p+q})\psi_1^{a_1} \psi_2^{a_2} \cdots \psi_l^{a_n}.
\ee
From Proposition \ref{NCohFT} it immediately follows that
\begin{lemma}
\be
{Z}_{q,p}^{\Theta}({\bf T})= \widehat{R}^0_{q,p}\cdot  \tau_\Theta.
\ee
\end{lemma}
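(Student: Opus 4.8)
The plan is to derive the identity directly from Norbury's Proposition \ref{NCohFT} by specializing the CohFT $\Omega_{g,n}$ to the triple Hodge class $\Lambda_g(-q)\Lambda_g(-p)\Lambda_g(\tfrac{pq}{p+q})$ and checking that the Givental operator it produces is precisely $\widehat{R}_{q,p}$. First I would recall that Mumford's theorem \eqref{MuM} identifies the Givental group element that generates the generating function $Z_{q,p}({\bf T})$ of triple Hodge integrals as $\widehat R({\bf s})$ with the parameters ${\bf s}$ specialized via the Miwa parametrization \eqref{Miws} at the Calabi-Yau point \eqref{speccond}; by the computation underlying Theorem \ref{th_RL} and Lemma \ref{idi}, this operator is exactly $\widehat{R}_{q,p}$ with the symplectic series $R_{q,p}(z)$ of \eqref{pqfam1}. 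So the cohomology classes $\Omega_{g,n}$ attached by Givental's reconstruction to $\widehat R_{q,p}$ acting on $\tau_{KW}$ are the triple Hodge classes themselves, i.e. $\Omega_{g,n}=\Lambda_g(-q)\Lambda_g(-p)\Lambda_g(\tfrac{pq}{p+q})$.

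Next I would simply substitute this identification into Proposition \ref{NCohFT}: with $\Omega_{g,n}=\Lambda_g(-q)\Lambda_g(-p)\Lambda_g(\tfrac{pq}{p+q})$, the class $\Theta_{g,n}\Omega_{g,n}$ is precisely the integrand appearing in ${\mathcal F}_{g,n}^{\Theta}$ of \eqref{GenfN}, so $\widetilde Z^{\Theta\Omega}({\bf T})={Z}_{q,p}^{\Theta}({\bf T})$. Meanwhile the right-hand side of Proposition \ref{NCohFT}, $\widehat R^0\cdot\tau_\Theta$, becomes $\widehat R^0_{q,p}\cdot\tau_\Theta$ once we fix the symplectic transformation to be $R_{q,p}(z)$. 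Combining the two equalities yields ${Z}_{q,p}^{\Theta}({\bf T})=\widehat R^0_{q,p}\cdot\tau_\Theta$, which is the assertion of the lemma.

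The only genuinely substantive point — and the place I would be most careful — is the verification that the Givental operator $\widehat R_{q,p}$ whose dilaton-shifted quadratic action reproduces $Z_{q,p}$ via \eqref{MuM} is exactly the operator $\widehat R^0$ of \eqref{GivN} once we replace the ``$\Theta$-version'' dilaton shift $\delta_k$ by $\delta_k^0$. This is guaranteed by the paragraph preceding Proposition \ref{NCohFT}: the relation between $\delta_k$ and $\delta_k^0$ in \eqref{shiftsR} and \eqref{shiftsR0} is exactly division by $z$, which is precisely the modification Norbury's proposition prescribes for passing from the $\psi$-class intersection theory to the $\Theta\psi$-class theory (the $\Theta$-class shifts the dimension so that the dilaton shift enters one degree lower). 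Thus no new computation is needed here; one only has to track that the $V^R_{ij}$ part of the operator is unchanged and that the translation part is the one dictated by \eqref{shiftsR0}. Everything else is a direct citation of Proposition \ref{NCohFT} and Theorem \ref{th_RL}, so the proof is short.

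\begin{proof}
By Mumford's theorem, in the form \eqref{MuM} specialized to the Calabi--Yau cubic Hodge class via \eqref{Miws} and \eqref{speccond}, the generating function $Z_{q,p}({\bf T})$ equals $\widehat{R}_{q,p}\cdot\tau_{KW}$, where $\widehat{R}_{q,p}$ is the Givental group element determined by $R_{q,p}(z)$ of \eqref{pqfam1}; this is precisely the operator identified in Theorem \ref{th_RL} and Lemma \ref{idi}. Hence the cohomology classes $\Omega_{g,n}$ produced by the Givental reconstruction applied to $\tau_{KW}$ with this $\widehat{R}_{q,p}$ are $\Omega_{g,n}=\Lambda_g(-q)\Lambda_g(-p)\Lambda_g(\tfrac{pq}{p+q})$. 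Applying Proposition \ref{NCohFT} with this choice of $R(z)$ and $\Omega_{g,n}$, the class $\Theta_{g,n}\Omega_{g,n}$ is exactly the integrand of ${\mathcal F}_{g,n}^{\Theta}$ in \eqref{GenfN}, so the left-hand side of Proposition \ref{NCohFT} is $\widetilde{Z}^{\Theta\Omega}({\bf T})={Z}_{q,p}^{\Theta}({\bf T})$, while its right-hand side is $\widehat{R}^0_{q,p}\cdot\tau_\Theta$. This gives ${Z}_{q,p}^{\Theta}({\bf T})=\widehat{R}^0_{q,p}\cdot\tau_\Theta$.
\end{proof}
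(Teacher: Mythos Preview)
Your proof is correct and follows exactly the paper's own argument: the paper states that the lemma ``immediately follows'' from Proposition~\ref{NCohFT}, and you have simply spelled out the specialization $\Omega_{g,n}=\Lambda_g(-q)\Lambda_g(-p)\Lambda_g(\tfrac{pq}{p+q})$ and the identification of the corresponding Givental element with $\widehat{R}_{q,p}$ via \eqref{MuM}. Your extra care about the dilaton-shift modification $\delta_k\mapsto\delta_k^0$ is welcome but not strictly needed, since that is built into the definition \eqref{GivN} of $\widehat{R}^0$ and hence into the statement of Proposition~\ref{NCohFT} itself.
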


Let us introduce the coefficients
\be
 \tilde{v}^0_k=[z^k]\int_0^z \left(df(\eta)- dy(\eta)\right).
\ee
They are related to the translations (\ref{shiftsR0}),
\be
\widehat{V}_0^{-1}\left( \sum_{k=2}^\infty \tilde{v}^0_k \frac{\p}{\p t_{k}} \right) \widehat{V}_0 \cdot Z({\bf t}^{o}) =  \sum_{k=1}^\infty \delta_k^0 \frac{\p}{\p T_k}\cdot Z({\bf t}^{o}).
\ee
Indeed
\begin{equation}
\begin{split}
e^{-{\sum_{\z_{>0}} a_k{\mathtt l}_k}} \sum_{k=2}^\infty v_k^0 z^k e^{{\sum_{\z_{>0}} a_k{\mathtt l}_k}} &=\int_{0}^{h(z)}  \left(df(\eta)- dy(\eta)\right)\\
&=\int_{0}^{z} (d \eta-d y(h(\eta))), 
\end{split}
\end{equation}
and from integration by parts
\begin{equation}
\begin{split}
\frac{1}{\sqrt{2\pi z}}\int_{\gamma}  (d \eta- dy(h(\eta))) e^{-\frac{\eta^2}{2z}}&= 1 - \frac{1}{\sqrt{2\pi z}}\int_{\gamma} d y(\eta)e^{-\frac{x(\eta)}{z}}\\
&=1-R(-z).
\end{split}
\end{equation}

Then we have
\begin{theorem}\label{Norbt}
Generating function of triple $\Theta$-Hodge integrals, satisfying the Calabi-Yau condition, in the variables (\ref{changeof}) is a tau-function of the KP hierarchy,
\be
{\tau}^{\Theta}_{q,p}({\bf t})={Z}_{q,p}^\Theta({{\bf T}^{q,p}(\bf t)}).
\ee
It is related to the Br\'ezin-Gross-Witten tau-function by an element of the Heisenberg-Virasoro group 
\be\label{qptauT}
{\tau}^{\Theta}_{q,p}({\bf t})=e^{\hbar^{-1} \sum_{k>0} \tilde{v}_k^0 \frac{\p}{\p t_{k}}}e^{\sum_{k\in\z_{>0}} a_k \widehat{L}_k} \cdot  \tau_{BGW}({\bf t}).
\ee
\end{theorem}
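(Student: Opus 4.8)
The plan is to repeat the proof of Theorem \ref{th_RL} almost verbatim, replacing the translation part by its $\Theta$-version, and then to evaluate the resulting identity on $\tau_\Theta=\tau_{BGW}$. By the Lemma preceding the statement we already know ${Z}_{q,p}^{\Theta}({\bf T})= \widehat{R}^0_{q,p}\cdot \tau_\Theta$, so all that is really needed is the $\Theta$-analogue of (\ref{RL}),
\be
\left.\widehat{R}^0_{q,p}\cdot Z({\bf t}^{o})\right|_{{\bf T}={\bf T}^{q,p}({\bf t})}=  e^{\hbar^{-1} \sum_{k>0} \tilde{v}_k^0 \frac{\p}{\p t_{k}}}\,e^{\sum_{k\in\z_{>0}} a_k \widehat{L}_k} \cdot Z({\bf t}^{o}),
\ee
valid for an arbitrary series $Z({\bf t}^{o})$; the theorem then follows by putting $Z=\tau_\Theta$. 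Comparing (\ref{GivN}) with (\ref{Givf}), the operators $\widehat{R}^0_{q,p}$ and $\widehat{R}_{q,p}$ carry the same quadratic exponential $e^{\frac{1}{2}\sum V^R_{ij}\frac{\p^2}{\p T_i\p T_j}}$ and the same linear substitution $T_k\mapsto T^R_k$; they differ only in the dilaton-shift factor, which passes from $e^{\hbar^{-1}\sum_{k\geq 2}\delta_k\frac{\p}{\p T_k}}$ to $e^{\hbar^{-1}\sum_{k\geq 1}\delta_k^0\frac{\p}{\p T_k}}$, with $\sum_k\delta_k^0 z^k=z^{-1}\sum_k\delta_k z^k$ by (\ref{shiftsR}) and (\ref{shiftsR0}).

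Because of this, Lemmas \ref{idi} and \ref{Chang_T} carry over with no change: they already identify, for $R_{q,p}$ in the family (\ref{pqfam1}), the quadratic part of the Givental operator together with the substitution $T_k\mapsto T^R_k$ --- after the change of variables ${\bf T}={\bf T}^{q,p}({\bf t})$ and restriction to functions of the odd times --- with the quadratic and linear parts of $\widehat{V}=e^{\sum_{k} a_k\widehat{L}_k}$. The sole new ingredient is the translation, and this is exactly the integration-by-parts computation displayed just before the theorem, which gives $\widehat{V}_0^{-1}\bigl(\sum_{k\geq 2}\tilde v_k^0\,\frac{\p}{\p t_k}\bigr)\widehat{V}_0\cdot Z=\sum_{k\geq 1}\delta_k^0\,\frac{\p}{\p T_k}\cdot Z$. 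Assembling the quadratic, linear and translation pieces in the same order as in the proof of Theorem \ref{th_RL} produces the displayed $\Theta$-analogue of (\ref{RL}).

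It then remains to substitute $Z=\tau_\Theta$. By Norbury's theorem $\tau_\Theta=\tau_{BGW}$ is a tau-function of the KdV hierarchy, hence of the KP hierarchy. The factor $e^{\sum_{k} a_k\widehat{L}_k}$ belongs to the Virasoro subgroup (\ref{Goper}) of the Heisenberg--Virasoro group ${\mathcal V}$, and the time translation $e^{\hbar^{-1}\sum_{k>0}\tilde v_k^0\frac{\p}{\p t_k}}$ belongs to its Heisenberg part; both send KP tau-functions to KP tau-functions, and their composition is again an element of ${\mathcal V}$. Hence ${\tau}^{\Theta}_{q,p}({\bf t})={Z}_{q,p}^\Theta({\bf T}^{q,p}({\bf t}))$ is a KP tau-function and equals $e^{\hbar^{-1}\sum_{k>0}\tilde v_k^0\frac{\p}{\p t_k}}\,e^{\sum_{k} a_k\widehat{L}_k}\cdot\tau_{BGW}({\bf t})$, establishing both assertions of Theorem \ref{Norbt}. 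I expect no genuine obstacle: the whole argument is bookkeeping resting on Section \ref{Section2}. The one point that deserves attention is the input coming from Proposition \ref{NCohFT} --- one should check that the rank-one specialization of Norbury--Pandharipande's statement really yields the operator $\widehat{R}^0$ with the $\Theta$-shifted dilaton term (\ref{shiftsR0}), which is what encodes the degeneracy of the $\Theta$-CohFT and makes $\tau_{BGW}$ rather than $\tau_{KW}$ the relevant building block, and that it introduces no spurious low-genus corrections. Granting that, nothing beyond the computations already on the page is required.
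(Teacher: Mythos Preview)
Your proposal is correct and follows exactly the route the paper takes: the paper presents the translation identification $\widehat{V}_0^{-1}\bigl(\sum_{k\geq 2}\tilde v_k^0\,\frac{\p}{\p t_k}\bigr)\widehat{V}_0=\sum_{k\geq 1}\delta_k^0\,\frac{\p}{\p T_k}$ (the integration-by-parts computation you cite) immediately before the theorem and then simply writes ``Then we have'' Theorem \ref{Norbt}, leaving the remaining steps --- reuse of Lemmas \ref{idi} and \ref{Chang_T}, evaluation on $\tau_\Theta=\tau_{BGW}$, and the Heisenberg--Virasoro action preserving KP --- implicit, just as you have spelled out.
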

\begin{conjecture}
Theorem \ref{Norbt} has a direct generalization for the $r$-spin case. 
\end{conjecture}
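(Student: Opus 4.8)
The plan is to deduce the $r$-spin $\Theta$-statement from an $r$-spin analog of Theorem~\ref{Th_H} exactly as Theorem~\ref{Norbt} was deduced from Theorem~\ref{Th_H} in the rank one case, so that the only genuinely new input beyond Conjecture~\ref{conjr} is the existence and integrability of the $\Theta$-twisted building block. First I would set up the higher-rank Givental framework appropriate to the $A_{r-1}$ Frobenius manifold: the scalar symplectic transformation is replaced by a matrix-valued $R(z)\in\mathrm{GL}(r-1)[\![z]\!]$ satisfying $R(z)R^{\ast}(-z)=1$, the Kontsevich--Witten tau-function is replaced by Witten's $r$-spin tau-function (a Gelfand--Dickey / $r$-KdV tau-function, hence in particular a KP tau-function), and the Heisenberg--Virasoro subgroup of $GL(\infty)$ is replaced by the symmetry group of the multicomponent KP hierarchy. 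The two-parameter Calabi--Yau family (\ref{pqfam1}) should be replaced by the $R$-matrix of the $r$-spin CohFT carrying the triple Hodge-class insertions $\Lambda_g(-q)\Lambda_g(-p)\Lambda_g(\frac{pq}{p+q})$, whose explicit form is again governed by the Bernoulli asymptotics of a beta/gamma-type integral, as in Lemma~\ref{idi}.

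Assuming Conjecture~\ref{conjr} --- that this higher-rank Givental operator coincides, after the linear change of variables generalizing (\ref{changeof}), with an element of the multicomponent symmetry group, so that the $r$-spin triple Hodge generating function is a KP tau-function --- the passage to the $\Theta$-version is ``soft''. The key structural fact I would invoke is the higher-rank analog of Norbury's Proposition~\ref{NCohFT} (Proposition 3.9 of \cite{NP1}): inserting the classes $\Theta_{g,n}$ changes the Givental operator only in its translation (dilaton-shift) part, replacing the shift $\delta_k$ of (\ref{shiftsR}) by the shift $\delta_k^0$ of (\ref{shiftsR0}), while leaving the quadratic part $V^R$ and the linear change of variables untouched, and simultaneously replaces the trivial building block by the $\Theta$-twisted building block. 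Since the quadratic and linear-change data are therefore identical to those in Conjecture~\ref{conjr}, and since a shift of times is a trivial symmetry of the KP hierarchy, the same $GL(\infty)$ element (up to this translation) carries the $\Theta$-twisted building block to the $r$-spin $\Theta$-Hodge generating function. Concretely, I would reprove the translation identity (\ref{psiq}) with $v_k$ replaced by the $\Theta$-coefficients $\tilde{v}^0_k$ defined via $\int_0^z(df(\eta)-dy(\eta))$, mirroring the computation at the end of Section~\ref{S_theta}.

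The input that is genuinely specific to the $\Theta$-conjecture, beyond Conjecture~\ref{conjr}, is to establish that the $r$-spin $\Theta$-twisted building block is itself a tau-function of the relevant hierarchy --- the $r$-spin analog of Norbury's identification $\tau_\Theta=\tau_{BGW}$ together with its Gelfand--Dickey integrability proved in \cite{MMS}. I expect this to require a matrix-model / generalized Kontsevich-model representation of the $r$-spin $\Theta$-classes, analogous to the BGW unitary integral, from which Gelfand--Dickey integrability and a Kac--Schwarz description would follow; absent such a representation one must instead verify the bilinear identities directly, which seems considerably harder.

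The main obstacle overall, however, remains Conjecture~\ref{conjr} itself, that is, the higher-rank version of the rigidity Theorem~\ref{th_RL}. The scalar inductive argument showing $n_k=0$ for $k>3$ must be replaced by a matrix-valued computation, and one must both classify which matrix $R$-matrices correspond to multicomponent KP symmetries and check that the $r$-spin triple Hodge $R$-matrix lies in that class; the methods of the companion paper \cite{H3_2} are designed precisely to address this point, and I would build on them. Once Conjecture~\ref{conjr} is in place, the argument above shows that the $\Theta$-generalization of Theorem~\ref{Norbt} follows by the same translation mechanism that produced Theorem~\ref{Norbt} from Theorem~\ref{Th_H}.
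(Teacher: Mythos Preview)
The statement you were asked to prove is a \emph{conjecture} in the paper, not a theorem: the paper offers no proof whatsoever, only the bare assertion that ``Theorem~\ref{Norbt} has a direct generalization for the $r$-spin case.'' There is therefore nothing in the paper to compare your argument against.

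That said, your proposal is not a proof but a research outline, and you acknowledge this yourself: you explicitly assume Conjecture~\ref{conjr} (the $r$-spin analog of Theorem~\ref{th_RL}), and you flag as an additional open input the $r$-spin analog of Norbury's identification $\tau_\Theta=\tau_{BGW}$ together with its integrability. Both of these are genuinely open in the paper's framework. Your reduction --- that once Conjecture~\ref{conjr} holds, the $\Theta$-version follows by the same translation-of-dilaton-shift mechanism that produced Theorem~\ref{Norbt} from Theorem~\ref{Th_H}, because Norbury's Proposition~\ref{NCohFT} only alters the translation part of the Givental operator and translations are trivial KP symmetries --- is the natural strategy and is consistent with the paper's own heuristic (cf.\ the closing remarks of Section~\ref{S_theta} and the reference to \cite{H3_2}). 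But until Conjecture~\ref{conjr} and the $r$-spin $\Theta$-building-block integrability are established, what you have written is a plausible plan of attack, not a proof, and the paper does not claim otherwise.
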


Similar to Corollary \ref{red} one has
\begin{corollary}
For $p=-2q$ the tau-function $\tau_{q,p}$ does not depend on even times, 
\be
\frac{\p}{\p t_{2k}}\tau^{\Theta}_{q,-2q}({\bf t})=0, \,\,\,\,\,\,\, k\in{\mathbb Z}_{>0},
\ee
that is $\tau^{\Theta}_{q,-2q}({\bf t})$ is a tau-function of the KdV hierarchy.
\end{corollary}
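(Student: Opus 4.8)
The plan is to mirror the proof of Corollary~\ref{red}, replacing Theorem~\ref{Th_H} by Theorem~\ref{Norbt}. By Theorem~\ref{Norbt} the function ${\tau}^{\Theta}_{q,p}({\bf t})={Z}_{q,p}^{\Theta}({\bf T}^{q,p}({\bf t}))$ is a tau-function of the KP hierarchy, and the change of variables ${\bf T}^{q,p}({\bf t})$ is exactly the one given by the recursion~(\ref{changeof}) --- the very same substitution that appears in Theorem~\ref{Th_H}, since the insertion of the $\Theta$-classes only amounts to feeding $\tau_{BGW}$ instead of $\tau_{KW}$ into the Givental operator and does not alter the linear change of times. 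Hence the whole statement reduces to the parity structure of~(\ref{changeof}) at $p=-2q$.

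First I would specialize the operator in~(\ref{changeof}): since $2q+p=0$, the coefficient $\frac{2q+p}{\sqrt{p+q}}$ of $\widehat{L}_{-1}$ vanishes, so the recursion becomes
\be
T_k^{q,-2q}({\bf t})=\left(q\widehat{L}_{0}+\widehat{L}_{-2}-\frac{t_1^2}{2}\right)\cdot T_{k-1}^{q,-2q}({\bf t}),\qquad T_0^{q,-2q}({\bf t})=t_1.
\ee
Now $\widehat{L}_0=\sum_{k\ge 1}k t_k\frac{\p}{\p t_k}$ is diagonal, while $\widehat{L}_{-2}-\frac{t_1^2}{2}=\sum_{k\ge 3}k t_k\frac{\p}{\p t_{k-2}}$ shifts the index by $2$; both are derivations preserving the parity of the indices of the time variables. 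Since $T_0^{q,-2q}=t_1$ is a polynomial in the odd times alone, a straightforward induction on $k$ shows that every $T_k^{q,-2q}({\bf t})$ is a polynomial in $t_1,t_3,t_5,\dots$ only. Consequently the composite ${Z}_{q,p}^{\Theta}({\bf T}^{q,-2q}({\bf t}))$ does not involve any even time, i.e. $\frac{\p}{\p t_{2k}}{\tau}^{\Theta}_{q,-2q}({\bf t})=0$ for all $k\in{\mathbb Z}_{>0}$.

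Finally I would invoke the standard fact that a KP tau-function that is independent of the even times $t_2,t_4,\dots$ is automatically a tau-function of the KdV ($2$-reduced KP) hierarchy; together with the KP integrability provided by Theorem~\ref{Norbt} this completes the argument. I do not anticipate a genuine obstacle here: the statement is a direct corollary of Theorem~\ref{Norbt}, and the only point requiring care is the observation that the substitution~(\ref{changeof}) used there is identical to the one in Theorem~\ref{Th_H}, so the parity argument of Corollary~\ref{red} carries over verbatim. One could alternatively read off the same conclusion from the factorized form~(\ref{qptauT}), but the change-of-variables route is the most transparent.
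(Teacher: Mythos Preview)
Your proof is correct and follows exactly the approach of the paper: the paper simply states that the proof is completely analogous to that of Corollary~\ref{red}, and you have spelled out precisely that analogous argument, namely that at $p=-2q$ the $\widehat{L}_{-1}$ term in the recursion~(\ref{changeof}) drops out and the remaining operators preserve the parity of the time indices. There is nothing to add.
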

The proof is completely analogous to the proof Corollary \ref{red}.

Generating functions ${Z}_{q,p}$, ${Z}_{q,p}^{\Theta}$ satisfy the Virasoro constraints, and tau-functions ${\tau}_{q,p}$, ${\tau}^{\Theta}_{q,p}$ satisfy the Heisenberg-Virasoro constraints. All these constraints can be obtained from the constraints for $\tau_{KW}$ and $\tau_{BGW}$ by conjugation with Givental (for $Z$'s) and Heisenberg-Virasoro (for $\tau$'s) group elements. Moreover, all these generating functions possess simple cut-and-join description. These topics, as well as matrix integral description of ${\tau}^{\Theta}_{q,p}({\bf t})$ and its generalization for the $r$-spin case, will be described elsewhere.

\section*{Acknowledgments}
The author is grateful to G. Carlet,  R. Kramer, and S. Shadrin for useful discussions and the anonymous referees for the suggested improvements.
This work was supported by  IBS-R003-D1 and by RFBR grant 18-01-00926. The author would also like to thank Vasily Pestun for his hospitality at IHES supported by the European Research Council under the European Union's Horizon 2020 research and innovation programme, QUASIFT grant agreement 677368.

\end{document}
